\newcounter{restate}
\newcommand{\cA}{\mathcal{A}}
\newcommand{\cB}{\mathcal{B}}
\newcommand{\cD}{\mathcal{D}}
\newcommand{\cE}{\mathcal{E}}
\newcommand{\cF}{\mathcal{F}}
\newcommand{\cG}{\mathcal{G}}
\newcommand{\cH}{\mathcal{H}}
\newcommand{\cS}{\mathcal{S}}
\newcommand{\cU}{\mathcal{U}}
\newcommand{\cV}{\mathcal{V}}
\newcommand{\av}{{\mathbf{a}}}
\newcommand{\cv}{{\mathbf{c}}}
\newcommand{\ev}{{\mathbf{e}}}
\newcommand{\mv}{{\mathbf{m}}}
\newcommand{\rv}{{\mathbf{r}}}
\newcommand{\sv}{{\mathbf{s}}}
\newcommand{\uv}{{\mathbf{u}}}
\newcommand{\vv}{{\mathbf{v}}}
\newcommand{\xv}{{\mathbf{x}}}
\newcommand{\yv}{{\mathbf{y}}}
\newcommand{\zv}{{\mathbf{z}}}
\newcommand{\Am}{{\mathbf{A}}}
\newcommand{\Bm}{{\mathbf{B}}}
\newcommand{\Hm}{{\mathbf{H}}}
\newcommand{\Pm}{{\mathbf{P}}}
\newcommand{\Sm}{{\mathbf{S}}}
\newcommand{\Um}{{\mathbf{U}}}
\newcommand{\hash}{{\mathcal{H}}}
\newcommand{\Cc}{{\mathcal C}}
\newcommand{\Dc}{{\mathcal D}}
\newcommand{\Fc}{{\mathcal F}}
\newcommand{\Hpub}{{\Hm_{\textup{pub}}}}
\newcommand{\Hsec}{{\Hm_{\textup{sec}}}}
\newcommand{\Csec}{\Cc_{\textup{sec}}}
\newcommand{\F}{\mathbb{F}}
\newcommand{\Unif}{\hookleftarrow}
\newcommand{\eqdef}{\mathop{=}\limits^{\triangle}}
\DeclareMathOperator*{\Ver}{\text{\tt{Vrfy}}}
\DeclareMathOperator*{\Sgn}{\text{\tt{Sgn}}}
\DeclareMathOperator*{\Gen}{\text{\tt{Gen}}}
\newcommand{\SD}{\ensuremath{\mathrm{SD}}}
\newcommand{\QDOOM}{\ensuremath{\mathrm{DOOM}_{\infty}}}
\newcommand{\DOOM}{\ensuremath{\mathrm{DOOM}}}
\newcommand{\QROM}{\ensuremath{\mathrm{QROM}}}
\newcommand{\DST}{{\ensuremath{ \mathrm{SURF} }}}
\newcommand{\Sgnsk}{\ensuremath{\mathtt{Sgn}^{\mathrm{sk}}}}
\newcommand{\Vrfypk}{\ensuremath{\mathtt{Vrfy}^{\mathrm{pk}}}}
\newcommand{\wt}[1]{|#1|}
\newcommand{\qhash}{q_{\textup{hash}}}
\newcommand{\qsig}{q_{\textup{sign}}}
\newcommand{\Initialize}{\mathsf{Initialize}}
\newcommand{\Hash}{\mathsf{Hash}}
\newcommand{\Sign}{\mathsf{Sign}}
\newcommand{\Finalize}{\mathsf{Finalize}}
\newcommand{\Dpub}{\Dc_{\textup{pub}}}
\newcommand{\Drand}{\Dc_{\textup{rand}}}
\newcommand*{\xdash}[1][3em]{\rule[0.5ex]{#1}{0.55pt}}
\newcommand{\horizontall}{\begin{center}\vspace*{-0.1cm}\xdash[35em]\vspace*{-0.1cm}\end{center}}
\newcommand{\COMMENT}[1]{}
\newcommand{\ket}[1]{|#1\rangle}
\def\01{\{0,1\}}
\def\01{\{0,1\}}
\newcommand{\eps}{\varepsilon}
\newcommand{\zo}{\{0,1\}}
\begin{document}

\title{A tight security reduction in the quantum random oracle model for code-based signature schemes}
\titlerunning{A tight security reduction in the QROM for code-based signature schemes}
%\runningtitle{zzz}
\author{Andr\'{e} Chailloux\inst{2} \and Thomas Debris-Alazard\inst{1,2}
}

\institute{Sorbonne Universit\'{e}s, UPMC Univ Paris 06\and Inria, Paris\\
	\email{\{andre.chailloux,thomas.debris\}@inria.fr}}

\maketitle

\begin{abstract}
	
	Quantum secure signature schemes have a lot of attention recently, in particular because of the NIST call to standardize quantum safe cryptography. However, only few signature schemes can have concrete quantum security because of technical difficulties associated with the Quantum Random Oracle Model (QROM). In this paper, we show that code-based signature schemes based on the full domain hash paradigm can behave very well in the QROM \textit{i.e.} that we can have tight security reductions. We also study quantum algorithms related to the underlying code-based assumption. Finally, we apply our reduction to a concrete example: the SURF signature scheme. We provide parameters for 128 bits of quantum security in the QROM and show that the obtained parameters are competitive compared to other similar quantum secure signature schemes.

\end{abstract}

\textbf{Keywords:} Quantum Random Oracle, Quantum-Safe Cryptography, Code-Based Cryptography, Signature Scheme, Quantum Decoding Algorithm

\section{Introduction}
Quantum computers are a potential big threat for many public-key cryptosystems. Every cryptographic application based on the hardness of factoring or the discrete logarithm (in a finite field or on elliptic curves) can be broken using Shor's quantum algorithm \cite{Sho94}. Even if the future existence of such quantum hardware is still in question, there is a rapidly increasing interest in building cryptosystems which are secure against quantum computers, field sometimes referred to as \emph{post-quantum} cryptography, or \emph{quantum-safe} cryptography. 

There are at least a couple of reasons why we should care and develop this line of research. First, sensitive data is now stored online and we want to guarantee long-term security. Indeed, we do not want current medical, political or other sensitive data to be stored today and decrypted in let's say 20 to 30 years from now, so cryptographic applications should try to find a way to prevent this kind of attacks. Quantum-safe cryptography would prevent a potential quantum computer to break today's schemes. Moreover, creating alternatives to RSA-based schemes could also be useful if another (non quantum) attack is found on factoring or the discrete log. While this doesn't seem to be the most probable, there has been for example big improvements on the discrete logarithm problem \cite{BGGM15} so such a scenario is not totally out of the question. Those concerns are strong and initiated a NIST call to standardize quantum-safe cryptography \cite{NIST16}, and also lead to industrial interest \cite{Goo16}. 

In this paper, we study (classical) signature schemes secure against quantum adversaries. Digital signature schemes allow to authenticate messages and documents and are a crucial element of many cryptographic applications such as software certification. There are several proposals for quantum-safe signatures based mostly on the hardness of lattice problems, such as BLISS \cite{DDLL13}, GPV \cite{EB14} or TESLA \cite{ABB+17}. There are also other quantum-safe assumptions that can be used such as the hardness of code-based problems, multivariate polynomial problems \cite{DS05} or the quantum security of hash functions \cite{BHH+15}. All of those quantum-secure signature schemes have different merits and imperfections. Some have good time and size parameters but use a very structured lattice-based assumption. Others have large  key and/or signature sizes and can have large running times.

Code based cryptography is among the oldest proposals for modern cryptography but suffered historically from the difficulty to construct a good signature scheme. The underlying computational assumption was actually one of the first proposed computational assumption \cite{Mce78} and still resists to known classical and quantum attacks. Until recently, there were very few proposals that were able to perform a code-based digital signature, the most notable being the CFS signature scheme \cite{CFS01}. However, a very recent proposal, the {\DST} signature scheme (see \cite{DST17}), presents competitive parameters (comparable to TESLA) but the security was shown only against classical adversaries.

Actually, most of signature schemes listed above - even though they use a quantum-safe computational assumption - can only prove security against classical adversaries. In fact, as of today only SPHINCS and {TESLA-2} have full security reductions which claim 128 bits of quantum security. This small amount of signature schemes comes from the difficulties to deal with the quantum random oracle model (QROM). In most of the security proofs used, we are in the random oracle model meaning that we use a hash function that behaves as a random function. A quantum attacker could still perform superposition attacks on this hash function and this creates many difficulties in the security reductions. There has already been a extensive amount of work to provide security reduction in the QROM\cite{BDF+11,Zha12}. However, most of them are not tight and there are significant losses in the parameters that can be used. TESLA recently managed to overcome those problems in the QROM while SPHINCS does not require the random oracle all together.

One of the most standard constructions for signature schemes is the Full Domain Hash (FDH) paradigm. In its most basic form, the idea is to use a trapdoor one-way function $f$, informally a function that can be efficiently inverted only with some secret key but that can be computed with the public key available. The signature of a message $\mv$ is a string $\xv$ such that $f(\xv) = \mathcal{H}(\mv)$ where $\mathcal{H}$ is a hash function, modeled in the ROM as a random function. Such a signature for $\mv$ can be done only by a signer which has access to the secret key. 

There are many constructions for signature schemes which use the FDH paradigm \cite{BR96,CFS01,BLS04}. Some of them can be proved secure even against quantum adversaries in the QROM, for example when the security reduction is history free \cite{BDF+11}. However, those reductions are, in many case, not tight. Indeed, one usually needs to reprogram the random oracle in the security proof and this is usually costly - especially in the quantum setting. This is one of the reasons why there are so few signature schemes with concrete quantum security parameters with a quantum reduction. However, as the NIST competition arrives, it becomes increasingly important to develop signature schemes, and associated security proofs, in order to provide fully quantum-safe cryptography.

\COMMENT{\subsection*{Code based signatures in the Full Domain Hash paradigm}
	In this work we study code-based signatures in the Full Domain Hash (FDH) paradigm. We present here briefly the main ideas and will give more formal details in Section \ref{sec:preliminaries}.
	
	A binary linear code $\mathcal{C}$ of length $n$ and dimension $k$ (that we denote by $\lbrack n,k \rbrack$-code) is a subspace of $\mathbb{F}_{2}^{n}$ of dimension $k$ and is usually defined by a parity-check matrix $\Hm \in \mathbb{F}_{2}^{(n-k)\times n}$ of full rank as:
	$$ 
	\mathcal{C} = \{ \xv \in \mathbb{F}_{2}^{n} : \Hm \xv^{T} = \mathbf{0} \}
	$$
	%The rate of this code is defined as $R \eqdef \frac{k}{n}$. 
	Let $\Hm \in \mathbb{F}_{2}^{(n-k)\times n}$ be a parity-check matrix of a code, {\DST} scheme is a FDH-like in which the following one way function is used:
	\begin{displaymath}
	\begin{array}{lccc}
	f_{\Hm,w} : &  S_{w} & \longrightarrow & \mathbb{F}_{2}^{n-k}\\
	& \ev & \longmapsto & \ev\Hm^{T}
	\end{array}
	\end{displaymath}
	Inverting this function means on an input $\sv$ (usually called a syndrome) to find an error $\ev$ of Hamming weight $w$ such that $\Hm \ev^{T} = \sv^{T}$. If we pick $\Hm$ to be a random parity-check matrix then the one-wayness of the above function can be reduced to the hardness on average of the following problem
	
	\begin{restatable}{problem}{problemSD}[Syndrome Decoding - $\SD$]\label{prob:SDH}~\\
		\begin{tabular}{ll}
			Instance: & $\quad\Hm\in\F_{2}^{(n-k)\times n}$, $\sv\in\F_{2}^{n-k}$, $w$ integer \\
			Output: & $\quad\ev\in\F_{2}^n$ such that $\wt{\ev}=w$ and $\Hm\ev^T=\sv^{T}$
		\end{tabular}
	\end{restatable}
	
	The idea is then to find a family of codes $\mathcal{F}$ such that the above function will be hard to invert given a public description of a random code in $\mathcal{F}$ (which will be the public key $p_k$) but easy to invert with some specific description of this code (which will be the secret key $s_k$, and a trapdoor for the function $f_{\Hm,w}$). The code-based FDH signature scheme becomes the following
	
	\begin{center}
		\begin{tabular}{l@{\hspace{3mm}}|@{\hspace{3mm}}l}
			$\Sgnsk(\mv)\!\!: \qquad \qquad \qquad$ & $\Vrfypk(\mv,(\ev',\rv))\!\!:$ \\
			$\quad \rv \Unif \{ 0,1 \}^{\lambda_{0}}$ &$\quad \sv \leftarrow \hash(\mv |\rv)$ \\
			$\quad \sv \leftarrow \hash(\mv |\rv)$ &$\quad w_{0} \leftarrow |\ev'|$ \\ 
			$\quad \ev \leftarrow \mathcal{D}_{\Hm_{\text{sec}},w}(\Sm^{-1}\sv)$ &$\quad \texttt{if } \Hpub \ev'^{T} =\sv^{T} \texttt{ and } w_0 = w \texttt{ return } 1$ \\
			$\quad \texttt{return}(\ev\Pm,\rv)$& $\quad \texttt{else return } 0 $\\			
		\end{tabular} 
	\end{center}

}

\subsection*{Contributions}
In this work we study code-based signatures in the Full Domain Hash (FDH) paradigm. We will show under which conditions we can perform a quantum security reduction in the QROM for such schemes. While our work was strongly motivated by a recent construction of the {\DST} signature scheme, it can apply to different constructions, in particular to a different choice of codes and metrics like the rank metric.

We start from a family of error correcting codes $\mathcal{F}$ from which we can construct a trapdoor one way function $f$. The FDH paradigm then allows us to construct a signature scheme (for more details, see Section \ref{sec:signatureScheme}). We show the following results on this signature scheme:

\begin{enumerate}
	\item We show conditions on the code family $\mathcal{F}$ used such that the resulting signature scheme is secure against quantum adversaries in the QROM. Under these conditions, we present tight security reductions to the $\QDOOM$ problem \cite{JJ02,S11} (the Decode One Out of Many problem), which is an already used and studied variant of the standard syndrome decoding problem, where we have the choice between many words to decode instead of a single word (the $\infty$ subscript indicates that we do not limit this number).
	\item  We perform a complete analysis of quantum algorithms for the $\QDOOM$ problem, which can serve a reference for future work. The main idea here is to use the best known quantum algorithms for the $4$-Sum problem and reduce $\QDOOM$ to this problem.
	\item We apply our security reduction to a specific signature scheme: SURF (see \cite{DST17}). In this scheme, the family of codes $\mathcal{F}$ used there satisfies all the requirements of point 1. Thanks to our reduction, we can provide a full quantum security proof of this scheme. We get  concrete parameters for which we have 128 bits of security in the QROM. We also compare the parameters of SURF with the ones of other schemes and show that it is competitive as a quantum-safe signature scheme.
\end{enumerate}

In our first contribution, the security reduction, we actually manage to avoid most of the problems of the QROM. In particular, we do not reprogram the random oracle by injecting an instance of a hard problem. From a purely abstract way, this is done by considering in the FDH paradigm a one-way trapdoor function for which it is essentially as hard to find one out of many preimages. This already appeared implicitly in security reduction for FDH-like signature schemes but was handled with challenge injection and resulted in a non-tight security proof. More precisely, we will consider in this paper a one-way function $f$ such that, for a set of random and independent elements $\{\yv_{1},\cdots,\yv_{q}\}$ where we have the choice of the $q$ we consider, it stays hard to find $(\xv,i)$ such that $f(\xv) = \yv_{i}$. This \textit{One Out of Many problem} is clearly easier that the problem of inversion ($q=1$) but in this new paradigm we have seen a drastic advantage by considering it instead of performing instance injection in order to have a security proof in the \QROM. Quantum security proofs seem more natural and flexible with this approach and could be used outside of code-based cryptography.
%\newline
%
%It is the first time we have seen such a drastic advantage in considering a One Out of Many problem instead of performing instance injection.  \\

\subsubsection{Why is it that in code-based signatures, we can afford to work on a `One Out of Many' variant of Syndrome Decoding?}

$ \ $ \\

The most standard problem in code-based cryptography is the syndrome decoding (SD) problem:

\begin{restatable}{problem}{problemSD}[Syndrome Decoding - $\SD$]\label{prob:SDH}~\\
	\begin{tabular}{ll}
		Instance: & $\quad\Hm\in\F_{2}^{(n-k)\times n}$, $\sv\in\F_{2}^{n-k}$, $w$ integer \\
		Output: & $\quad\ev\in\F_{2}^n$ such that $\wt{\ev}=w$ and $\Hm\ev^T=\sv^{T}$
	\end{tabular}
\end{restatable}

We instead rely on the $\QDOOM$ problem:

\begin{restatable}{problem}{problemQDOOM}[\QDOOM]~\\
	\label{prob:qdoom}
	\begin{tabular}{ll}
		Instance: & $\quad\Hm \in \mathbb{F}_{2}^{(n-k) \times n}$ ; $\mathcal{H}$ a hash function in the QROM which takes its \\ 
		&\quad  values in $\mathbb{F}_{2}^{n-k}$ \\
		Output: &\quad $\ev \in \mathbb{F}_{2}^{n}$ of Hamming weight $w$, $\av \in \mathbb{F}_{2}^{*}$ such that, $\Hm\ev^{T} = \mathcal{H} (\av)^{T}$
	\end{tabular}
\end{restatable}

Here, we do not have a single input $\sv$ but we can generate as many inputs $\mathcal{H}(\av)$ as we want and we only need to solve the SD problem on one of the inputs. In the quantum setting, we even have access to a quantum oracle version of $\mathcal{H}$. It seems at first sight that this second problem is substantially easier than the first one. For example, when performing a brute force algorithm for SD, then this algorithm can be used to solve $\QDOOM$ $q$ times faster if we add $q$ queries to $\mathcal{H}$. However, the best classical and quantum algorithms for SD are much better than the brute force algorithm.
What actually happens is that the best known classical algorithms for $\QDOOM$ are not that much faster than those for SD. This running time difference decreases even more when looking at parameters used in the SURF signature scheme. Moreover, the quantum setting does not offer in the state-of-the-art a fully quadratic advantage compared to classical case for solving {\DOOM$_{\infty}$} as we will see below. This, combined with our tight security reduction, will allow us to give parameters for the SURF signature scheme for a quantum security of 128 which correspond to a classical security smaller than 256 bits.  \\

In our second contribution, we make the above explicit for quantum algorithms as well. The SD problem has been widely studied both classically and quantumly \cite{KT17}. There is - for most parameters - an algorithmic technique that does significantly better than others for this problem: the information set decoding technique first presented by Prange \cite{P62} and then improved several times \cite{S88,D91,MMT11,BJMM12,MO15}. 
Similarly, the best {\QDOOM} algorithms use the same method and the current state-of-the-art can be found in \cite{S11}.

The best asymptotic exponent among all those decoding techniques are \cite{MO15,BJMM12} for SD. However, algorithm \cite{MO15} is penalized by a big polynomial overhead which makes it more expensive that \cite{BJMM12}. It is why in the following table we will consider asymptotic exponents given by \cite{BJMM12}.  We give in Table \ref{table:classExp} classical exponents in base 2 of the Prange algorithm (which was the first algorithm proposed to solve syndrome decoding problem), \cite{BJMM12} and the state-of-the-art to solve {\DOOM}$_{\infty}$ (see \cite{S11}). We present the running times for $k = n/2$ and for two error weights $w$: namely $w \approx 0.11n$ which corresponds to the Gilbert-Varshamov weight and is the weight around which those problems are the hardest; and $w \approx 0.191n$ which corresponds to the weight used in the {\DST} signature scheme.

\begin{table}[H]
	\centering
	\begin{tabular}{|c||c|c|c|}\cline{2-4}
		\multicolumn{1}{c|}{} & \multicolumn{3}{|c|}{Classical asymptotic exponent in base $2$ (divided by $n$) } \\
		\hline
		$w/n$ & \qquad SD (Prange) \qquad \qquad  & \qquad SD (\cite{BJMM12}) \qquad \qquad  & $\QDOOM$ \cite{S11} \\
		\hline
		\hline 
		0.11 & 0.1199
		& 0.1000 & 0.0872 \\
		0.191 & 0.02029 & 0.01687 & 0.01654  \\
		\hline
	\end{tabular}
	\vspace{0.5cm}
	\caption{Asymptotic exponent for classically solving SD and {\DOOM}$_{\infty}$ for
		$k/n=0.5$}
	\label{table:classExp} 
\end{table}
\vspace*{-0.8cm}

The above table contains classical asymptotic exponent in base $2$ (divided by $n$). This means for example that the Prange algorithm for SD with $w = 0.11n$ runs in time $2^{0.1199n}$. \\

We extend in this paper the best {\QDOOM} algorithms to the quantum setting. We first present an overview of existing algorithms and we then show that the best known quantum algorithms for {\QDOOM} are very close, in complexity to the best known quantum algorithms for SD. Table \ref{table:quantExp} compares our algorithm to the current quantum knowledge for the same range of parameters. We will come back to these tables in \S\ref{sec:qdoom}. 
%Table \ref{table:classExp} gives some asymptotic exponent in the classical case while Table \ref{table:quantExp} compares our algorithm to the current quantum knowledge for the same range of parameters. We will come back to these tables in \S\ref{sec:qdoom}.  

\begin{table}[H]
	\centering
	\begin{tabular}{|c||c|c|c|}\cline{2-4}
		\multicolumn{1}{c}{}& \multicolumn{3}{|c|}{Quantum asymptotic exponent in base $2$ (divided by $n$)} \\ 
		\hline 
		$w/n$ &  SD (Prange)  & SD \cite{KT17}  &  \QDOOM (this work)  \\
		\hline
		\hline
		0.11 & 0.059958 & 0.058434
		& 0.056683
		\\
		0.191 & 0.010139 & 0.009218
		& 0.009159
		\\
		\hline
	\end{tabular}
	\vspace{0.3cm}
	\caption{Asymptotic exponent for quantumly solving SD and {\DOOM}$_{\infty}$ for
		$k/n=0.5$}
	\label{table:quantExp}
\end{table} 
\vspace*{-0.8cm}

As we can see, the best asymptotic exponents between the SD problem and the $\QDOOM$ problem are very close, especially for $w \approx 0,191$ which corresponds to the parameters of the {\DST} signature scheme. This allows us to greatly improve the security reduction in the QROM compared to the case where we would have used SD as a hard problem and performed challenge injection. \\

In our third contribution, we use the results presented above on the SURF signature scheme. As we said, there are very few signature schemes that claim quantum security. We present in table \ref{table:signatureparameters} security parameters for known quantum-safe (with a quantum security reduction) signature schemes. This data is taken from \cite{ABB+17}, where we added parameters for the {\DST} scheme obtained here.

\begin{table}[H]
	\centering
	\caption{Security parameters for signature schemes with quantum security claims}
	\label{table:signatureparameters}
	\begin{tabular}{|c|c|c|c|c|}
		\hline\noalign{\smallskip}
		Scheme & Quantum security  & Public key size  & Private key size & Signature size \\
		& (in bits) & (in kBytes) & (in kBytes) & (in kBytes) \\
		\noalign{\smallskip}
		\hline
		\noalign{\smallskip}
		SPHINCS & $128$ & $1$  & $1$ & $41$ \\
		GPV-poly & 59 & $55$ & $26$ & $32$ \\
		GPV & 59 & $27840$ & $12064$ & $30$ \\
		TESLA-2 & $128$ & $21799$ & $7700$ & $4$ \\
		SURF & $128$ & $5960$ & $3170$ & $1.7$ \\
		\hline
	\end{tabular}
\end{table}

We only presented here signature schemes for which quantum security is provided. There are many other signature that rely on a quantum-secure computational assumption but the full parameter analysis is not provided. We refer to \cite{ABB+17} for further details on this topic.

There is also another important metric that we do not discuss here: the running time of the different signature schemes. We did not add them here since both TESLA-2 and SURF do not have those available yet. Also the main contribution of our paper is to present an efficient security reduction, and not to compare in detail existing signature schemes.

%%%% partie enlevée, peut-être qu'on peut récupérer des bouts

\COMMENT{
	\subsection*{Techniques and discussion}

	The syndrome decoding problem is one of the oldest computational assumptions used in public key cryptography and the DOOM variant was also studied in several papers. Only several results appeared for the quantum case.
	
	We only presented here signature schemes for which quantum security is provided. There are many other schemes that rely on post-quantum computational assumptions. Some of them, such as Bliss, rely on structured variants of lattice based problems and have much key sizes. However, those rely on structured and compacted lattices and ...
	
	We want to point out that our security reduction could also hold when considering a cyclic code which would drastically decrease the key sizes. However, because of the uncertainty of the strength the computational assumptions for those kind of codes, we prefer to avoid those constructions and stay with very limited structure in the used codes.
	
	\subsection*{Difficulties in the QROM}
	How did we overcome the QROM problems in our setting?
	\begin{itemize}
		\item First, we use the fact that we can base the security of code-based signature schemes on the Decode One Of Many (DOOM) problem. The idea is that it is not much easier, at least with the best currently known algorithms, to decode a syndrome if we are given $N$ syndromes and want to decode one of them instead of having to decode a specific syndrome. This will allow us to avoid input injection in the QRO and this problem will arise very naturally in the reduction.
		\item We will at some point reprogram the random oracle so that at some random values, it will output a structured value that it knows how to decode. This will allow us to emulate the Sign procedure of the scheme. Reprogramming is usually costly in the QROM. However, we will do it in a way that involves only quantum circuits (which will be other QRO in fact) which will correspond to an efficient description and computation of the reprogramming we perform. In particular, we never "hard code" this reprogramming by keeping for example a table of the changed values.   
	\end{itemize}
	
	\subsection{DOOM in the FDH}
	Without going too much into technical details, we want to point out the reasons why the DOOM problem is specifically adapted to signature schemes based on the FDH paradigm.
	
	We start from a binary linear code $\mathcal{C}$ of length $n$ and dimension $k$ (that we denote by $\lbrack n,k \rbrack$-code), which is a subspace of $\mathbb{F}_{2}^{n}$ of dimension $k$ and is usually defined by a parity-check matrix $\Hm \in \mathbb{F}_{2}^{(n-k)\times n}$ of full rank as:
	$$ 
	\mathcal{C} = \{ \xv \in \mathbb{F}_{2}^{n} : \Hm \xv^{T} = \mathbf{0} \}
	$$
	Code-based signatures schemes we consider are FDH-like in which the following one way function is used:
	\begin{displaymath}
	\begin{array}{lccc}
	f_{\Hm,w} : &  S_{w} & \longrightarrow & \mathbb{F}_{2}^{n-k}\\
	& \ev & \longmapsto & \ev\Hm^{T}
	\end{array}
	\end{displaymath}
	where $\Hm \in \mathbb{F}_{2}^{(n-k)\times n}$ is a parity-check matrix of a $\lbrack n,k\rbrack$-code. Inverting this function means on an input $\sv$ (usually called a syndrome) to find an error $\ev$ of Hamming weight $w$ such that $\Hm \ev^{T} = \sv^{T}$. 
	
	This means that the one-wayness of the above function relies on the following canonical problem in code-based cryptography
	
	---
	A binary linear code $\mathcal{C}$ of length $n$ and dimension $k$ (that we denote by $\lbrack n,k \rbrack$-code) is a subspace of $\mathbb{F}_{2}^{n}$ of dimension $k$ and is usually defined by a parity-check matrix $\Hm \in \mathbb{F}_{2}^{(n-k)\times n}$ of full rank as:
	$$ 
	\mathcal{C} = \{ \xv \in \mathbb{F}_{2}^{n} : \Hm \xv^{T} = \mathbf{0} \}
	$$
	Code-based signatures schemes we consider are FDH-like in which the following one way function is used:
	\begin{displaymath}
	\begin{array}{lccc}
	f_{\Hm,w} : &  S_{w} & \longrightarrow & \mathbb{F}_{2}^{n-k}\\
	& \ev & \longmapsto & \ev\Hm^{T}
	\end{array}
	\end{displaymath}
	where $\Hm \in \mathbb{F}_{2}^{(n-k)\times n}$ is a parity-check matrix of a $\lbrack n,k\rbrack$-code. Inverting this function means on an input $\sv$ (usually called a syndrome) to find an error $\ev$ of Hamming weight $w$ such that $\Hm \ev^{T} = \sv^{T}$.
	
	This means that we base the one-wayness of the above function on the hardness of the following canonical problem.
	
	---
	
	When looking carefully at the security proofs for many FDH-like signature schemes, we require a stronger condition than the one-wayness. Namely, we require that for a hash function $\mathcal{H}$, it is hard to find a couple $(a,e)$ such that $f_{\Hm,w} (e) = \mathcal{H}(a)$. In the (classical) random oracle model, if we allow $q$ queries to $\mathcal{H}$, this means that we require the hardness of the following problem, known in code-based literature as the Decode One Out of Many problem
	
	---

	This essentially means that we do not only want $f_{\Hm,w}$ to be one-way but that it is hard to find $1$ preimage out of $q$ elements. In many proofs, this results in a loss of a $q$ factor and makes the security proof not tight. The reason is that it can be up to $q$ times faster to solve the $1$ out of $q$ preimage problem than finding $1$ specific preimage. This speedup can be observed for example when the best known algorithm for finding a preimage is the exhaustive brute force algorithm: in this case, $1$ preimage can be found in time $O(\min\{|S_w|,2^{n-k}\})$ while $1$ out of $q$ preimages can be found in time $O(\frac{\min\{|S_w|,2^{n-k}\}}{q})$.

	When we want to prove that the underlying signature scheme
}

\subsection*{Organisation of the paper}

After presenting some notations, we provide in Section \ref{sec:qqrom} a description of the quantum random oracle model. In Section \ref{sec:signatureScheme}, we present the general construction of code-based FDH signatures schemes and code-based problems. In Section \ref{sec:preliminaries}, we present some general preliminaries as well as security notions for signature schemes. In Section \ref{sec:securityProof}, we present the quantum security proof in the QROM. In Section \ref{sec:qdoom}, we study quantum algorithms for the $\QDOOM$ problem. In Section \ref{sec:qsurf} we apply our security reduction to the {\DST} signature scheme and show concrete parameters that achieve 128 bits of quantum security. Finally in Section \ref{sec:conclusion}, we perform a small discussion about the obtained results, and present directions for future research.

\subsection*{Notations}

We provide here some notation that will be used throughout the paper.  
Vectors will be written with  bold letters (such as $\ev$) and  uppercase bold letters are used to denote matrices (such as $\Hm$). Vectors are in row notation.
Let $\xv$ and $\yv$ be two vectors, we will write $(\xv|\yv)$ to denote their concatenation.
%We also denote for a subset $I$ of positions of the vector $\xv=(x_i)_{1 \leq i \leq n}$ by $\xv_I$ the vector whose components are those of $\xv$ which are indexed by $I$, i.e. 
%$$
%\xv_I = (x_i)_{i \in I}.
%$$ 
%We define the support of $\xv$ as
%$$
%\Sp(\xv) \eqdef \{ i \in \{1,\cdots,n \} \mbox{ such that } x_{i} \neq 0 \}
%$$
The Hamming weight of $\xv$ is denoted by 	
$|\xv|$.
By some abuse of notation, we will use the same notation 
to denote the size of a finite set: $|S|$ stands for the size of the finite set $S$.  
It will be clear from the context whether $|\xv|$ means the Hamming weight or the size of a finite set. 
%Note that 
%$$
%|\xv| = |\Sp(\xv)|.
%$$
The notation $x \eqdef y$ means that $x$ is defined to be equal to  $y$. We denote by $\mathbb{F}_{2}^{n}$ the set of binary vectors of length $n$ and 
$S_{w}$ is its subset of words of weight $w$. Let $S$ be a finite set, then $x \Unif S$ means 
that $x$ is assigned to be a random element chosen uniformly at random in $S$. For a distribution $\Dc$ we write $\xi \sim \Dc$ to indicate that the random variable $\xi$ is chosen according to $\Dc$.
%We denote the uniform distribution on $S_{w}$ by $\mathcal{U}_{w}$. 
%
%
%A binary linear code $\mathcal{C}$ of length $n$ and dimension $k$ is a subspace of $\mathbb{F}_{2}^{n}$ of dimension $k$ and is usually defined by a parity-check matrix $\Hm$ of size $r \times n$ as
%$$
%\Cc = \left\{ \xv \in \{0,1\}^n: \Hm \xv^T=\mathbf{0}\right\}.
%$$ 	
%When $\Hm$ is of full rank (which is usually the case) we have $r = n-k$.
%The rate of this code (that we denote by $R$)  is defined as
%$R \eqdef \frac{k}{n}$.
%\newline

%\section{The Quantum Random Oracle Model - QROM}
%\input{qrom.tex} 
\section{The quantum random oracle model}
\label{sec:qqrom}

\subsection{The random oracle model - ROM.}

In many signature schemes we need a function that behaves like a random function. We typically use hash functions to mimic such random functions. The random oracle model (or ROM) is an idealized model that assumes that the hash function used behaves exactly like a random function. This model is appealing as it allows simpler security proofs. There are some specific cases where the ROM is not adapted \cite{CGH04,LN09}. Despite those examples, this model is fairly standard and accepted in the cryptographic community. Particularly, there have been no successful real-world attacks specifically because of the ROM. Additionally, schemes that are proven secure in the ROM are usually efficient.

More precisely, consider a hash function $\mathcal{H} : \zo^n \rightarrow \zo^m$ used in a cryptographic protocol. An adversary would perform an attack by applying $\mathcal{H}$ many times. Suppose the adversary makes $q$ calls to $\mathcal{H}$ on inputs $\xv_1,\dots,\xv_q$ and get answers $\mathcal{H}(\xv_1),\dots,\mathcal{H}(\xv_q)$. In the ROM, this function $\mathcal{H}$ is replaced by a function $f$ uniformly chosen from the set of functions from $\zo^n$ to $\zo^m$. This means that $f$ outputs a random output $\yv_i$ for every input $\xv_i$.

Describing a random function from $\zo^n$ to $\zo^m$ requires $m2^n$ bits and cannot be hence realistically full described. Fortunately, one can emulate queries to a random function $f$ without describing it entirely. We use the following procedure:

\begin{itemize}
	\item[] On input $\xv$, we distinguish $2$ cases: if $\xv$ was queried before then give the same answer, otherwise pick a random $\yv \in \zo^m$ and output $\yv = f(\xv)$.
\end{itemize}
We keep a table of the inputs that were already queried to perform the above procedure, which is efficient. This procedure is especially useful when we want to slightly modify the function $f$, for example by injecting the input of a computational problem as an output of $f$, or more generally to give a special property to $f$.

\subsection{The quantum random oracle model - QROM.}
Since we have hash functions that are believed to be secure against quantum adversaries, it is natural to extend the ROM to the quantum setting. Here again, we assume that we replace the hash function $\mathcal{H} : \zo^n \rightarrow \zo^m$ by a function $f$ uniformly chosen from the set of functions from $\zo^n$ to $\zo^m$. 

What will change compared to the classical setting is the way those functions are queried. Indeed, from the circuit $\mathcal{H}$, it is always possible to construct the unitary $O_\mathcal{H}$ acting on $n+m$ qubits satisfying
$$ \forall \xv \in \zo^n, \forall \yv \in \zo^m, \ O_\mathcal{H}(\ket{\xv}\ket{\yv}) = \ket{\xv}\ket{\mathcal{H}(\xv) + \yv}. $$
When replacing $\mathcal{H}$ with a random function $f$, queries to $O_\mathcal{H}$ are replaced with queries to $O_f$ where
$$ \forall \xv \in \zo^n, \forall \yv \in \zo^m, \ O_f(\ket{\xv}\ket{\yv}) = \ket{\xv}\ket{f(\xv) + \yv}. $$
Again, a random function $f$, and the associated unitary $O_f$ is fully determined by $m2^n$ bits corresponding to all the outcomes $f(\xv)$ for $\xv \in \zo^n$. Unlike the classical case, there is no known procedure to efficiently produce answer to queries. Suppose for example that you want to emulate a query to $O_f$ on input $\frac{1}{2^{n/2}} \sum_{\xv \in \zo^n} \ket{\xv}\ket{\mathbf{0}}$. In order to emulate this, and generate $\frac{1}{2^{n/2}} \sum_{\xv \in \zo^n} \ket{\xv}\ket{f(\xv)}$ we would need to generate some randomness $\rv$ for each $\xv \in \zo^n$. Another way of seeing this difficulty is that the procedure that generates a random number cannot be represented as a circuit and therefore cannot be quantized by the usual procedure.

\subsection{Tweaking the QROM.} As we mentioned, it is often useful to modify the random function and to give it extra properties in order to prove the security of the underlying cryptographic scheme. The fact that we need to emulate $O_f$ makes it much harder to include those changes in an efficient way. There are several known techniques, such as rewinding, reprogramming or challenge injection that can be done in some cases, often with a polynomial loss in the number of challenges. 

Our goal was to limit as much as possible the use of those techniques in order to have the quantum security as tight as possible. The only result we will use is the following from \cite{Zha12}:

\begin{restatable}{proposition}{propqdistrib}
	\label{propo:qdistrib} 
	Say $\cA$ is a quantum algorithm that makes $q$ quantum oracle queries. Suppose further that we draw the oracle $O$ from two distributions. The first is the random  oracle  distribution.  The  second  is  the  distribution  of  oracles  where  the value of the oracle at each input $x$ is identically and independently distributed by some distribution $D$ whose variational distance is within $\eps$ from uniform. Then the variational distance between the distributions of outputs of $\cA$ with each oracle is at most $\frac{8\pi}{\sqrt{3}}q^{\frac{3}{2}}\sqrt{\eps}$.
\end{restatable}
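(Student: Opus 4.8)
The plan is to follow the polynomial method of \cite{Zha12}. The first move is to reduce the statement about two output \emph{distributions} to a statement about a single number. Since the variational distance between two distributions equals $\max_{S}$ of the difference of the masses they assign to an event $S$, it suffices to fix the event $S$ realising this maximum and to bound $|P(U)-P(D)|$, where $P(\cdot)=\prob[\cA\text{ outputs in }S]$ when $\cA$ is run with the uniform oracle, respectively with the oracle whose value at each input is i.i.d.\ $D$. Thus the whole proposition reduces to controlling how one fixed acceptance probability moves as the per-input output law is deformed from uniform to $D$.

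The core step, and the genuinely quantum one, is a degree bound on $P$. Writing $\cA$ as $U_qO_fU_{q-1}\cdots U_1O_fU_0$, I would encode the oracle function $f$ by its $\zo$-valued indicators $h_{x,y}$ (equal to $1$ iff $f(x)=y$) and observe that the matrix entries of the query unitary $O_f$ are \emph{linear} in the $h_{x,y}$ (the transition from $\ket{x}\ket{y}$ to $\ket{x}\ket{y'}$ carries the single variable $h_{x,y\oplus y'}$), while the $U_i$ do not depend on $f$. Hence every amplitude of the final state is a polynomial of degree $\le q$ in the $h_{x,y}$, and each outcome probability $\bigl|\braket{z}{\psi_f}\bigr|^2$ has degree $\le 2q$. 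Taking the expectation over the random draw of $f$, independence across distinct inputs together with the collapse $h_{x,y}h_{x,y'}=\delta_{y,y'}h_{x,y}$ turns $\esp_f[\,\bigl|\braket{z}{\psi_f}\bigr|^2\,]$ into a polynomial of degree $\le 2q$ in the marginal probabilities $p_{x,y}=\prob[f(x)=y]$. Consequently $P$ is a $[0,1]$-valued polynomial of degree $\le 2q$ in the variables $\{p_{x,y}\}$.

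With the degree bound in hand, I would restrict $P$ to the one-parameter family of marginals interpolating between the uniform point ($p_{x,y}=2^{-m}$) and the marginals of $D$, obtaining a univariate polynomial $g$ of degree $\le 2q$ with $g\in[0,1]$ on its domain and with the two endpoints of interest equal to $P(U)$ and $P(D)$. The proposition then follows from an extremal inequality for bounded polynomials: a degree-$d$ polynomial confined to $[0,1]$ cannot vary quickly, and integrating a Bernstein-type pointwise derivative bound across the $\eps$-sized step separating $U$ from $D$ yields a bound of the shape $c\,d^{3/2}\sqrt{\eps}$; substituting $d=2q$ and tracking the constant produces $\frac{8\pi}{\sqrt3}q^{3/2}\sqrt{\eps}$.

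I expect this last analytic step to be the main obstacle. The degree bound is clean, but converting ``$D$ is $\eps$-close to uniform in total variation'' into a variation bound for a degree-$2q$ polynomial is delicate: a naive linear interpolation has a validity interval that can shrink with the alphabet size, so the argument must instead exploit the endpoint behaviour of extremal polynomials, where the derivative bound degrades like the inverse square root of the distance to the boundary. It is precisely this square-root blow-up, integrated against the step size, that accounts for the appearance of $\sqrt{\eps}$ rather than $\eps$ and, together with the degree $2q$, for the exponent $q^{3/2}$ and the explicit constant.
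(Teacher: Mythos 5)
First, a remark on the comparison itself: the paper contains no proof of Proposition~\ref{propo:qdistrib} --- it is imported verbatim from \cite{Zha12} (``The only result we will use is the following from \cite{Zha12}''), so your attempt has to be measured against the argument in that reference. Your first two steps do match it and are correct: reducing the variational distance of the output distributions to a single acceptance probability, and the polynomial-method observation that the oracle-averaged acceptance probability is a polynomial of degree at most $2q$ in the marginals $p_{x,y}=\prob(f(x)=y)$, are exactly the foundation of Zhandry's Section on distributions of oracles.

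The gap is in the last step, and it is not merely a missing computation. Any one-parameter family of \emph{genuine} distributions joining the uniform distribution to $D$ necessarily places $D$ (and essentially also the uniform point) on the boundary of the region where all marginals stay nonnegative --- if $D(y)=0$ for some $y$, the segment from uniform towards $D$ cannot be prolonged past $D$ at all. So the univariate polynomial $g$ is only known to lie in $[0,1]$ on an interval whose two endpoints are precisely the two points you want to compare, and no Markov- or Bernstein-type inequality extracts anything from that. The exponents also betray the problem: integrating the Bernstein bound $|g'(x)|\le d/\sqrt{1-x^2}$ over a window of width $\eps$ gives $O(d\sqrt{\eps})=O(q\sqrt{\eps})$, which is not of the form $q^{3/2}\sqrt{\eps}$, so this cannot be the mechanism behind the stated constant. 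The actual proof in \cite{Zha12} inserts an intermediate object, the \emph{small-range distribution} $\mathrm{SR}_r$ over a base distribution $D'$: draw $y_1,\dots,y_r$ i.i.d.\ from $D'$, draw a random $g:\zo^n\to\{1,\dots,r\}$, and set $O(x)=y_{g(x)}$. The polynomial method, combined with a Markov-type lemma that is \emph{cubic} in the degree (for polynomials bounded on $[0,1]$ evaluated at the points $1/r$), shows that a $q$-query algorithm distinguishes $\mathrm{SR}_r$ from the i.i.d.-$D'$ oracle with advantage at most $\pi^2(2q)^3/(3r)$, uniformly in $D'$. The hybrid ``random oracle $\to\mathrm{SR}_r(\text{uniform})\to\mathrm{SR}_r(D)\to$ i.i.d.\ $D$'' then costs $2\cdot\frac{8\pi^2q^3}{3r}+r\eps$, the middle term following from subadditivity of the statistical distance over the $r$ independent samples (exactly the paper's Proposition~\ref{prop:product}); choosing $r=4\pi q^{3/2}/\sqrt{3\eps}$ yields precisely $\frac{8\pi}{\sqrt{3}}q^{3/2}\sqrt{\eps}$. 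Both the $\sqrt{\eps}$ and the exponent $3/2$ arise from this optimization over $r$, not from endpoint behaviour of extremal polynomials; that intermediate construction is the idea your proposal is missing.
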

\section{Code-based Full Domain Hash signature schemes}
\label{sec:signatureScheme} 
We give in this section the code-based signatures schemes we will consider in our security proof in the {\QROM} and in \S\ref{subsec:cbProbs} code-based problems that will be involved.

\subsection{Description of the scheme}
\label{subsec:scheme}

Let us first recall the concept of signature schemes.

\begin{definition}
	[Signature Scheme]A signature scheme $\cS$ is a triple of algorithms
	$\Gen$, $\Sgn$, and $\Ver$ which are
	defined as:
	\begin{itemize}
		\item The key generation algorithm $\Gen$ is a probabilistic
		algorithm which given $1^{\lambda}$, where $\lambda$ is the
		security parameter, outputs a pair of matching public and private
		keys $(pk,sk)$;
		\item The signing algorithm is probabilistic and takes as input a
		message $\mv \in \{0,1\}^{*}$ to be signed and returns a signature
		$\sigma = \Sgnsk(\mv)$;
		\item The verification algorithm takes as input a message $\mv$ and
		a signature $\sigma$. It returns $\Vrfypk(\mv,\sigma)$
		which is $1$ if the signature is accepted and $0$ otherwise. It
		is required that \ $\Vrfypk(\mv,\sigma)=1$ if  $\sigma = \Sgnsk(\mv)$.
	\end{itemize}
\end{definition}

We briefly present now the code-based signatures scheme we consider. A binary linear code $\mathcal{C}$ of length $n$ and dimension $k$ (that we denote by $\lbrack n,k \rbrack$-code) is a subspace of $\mathbb{F}_{2}^{n}$ of dimension $k$ and is usually defined by a parity-check matrix $\Hm \in \mathbb{F}_{2}^{(n-k)\times n}$ of full rank as:
$$ 
\mathcal{C} = \{ \xv \in \mathbb{F}_{2}^{n} : \Hm \xv^{T} = \mathbf{0} \}
$$
Code-based signatures schemes we consider are FDH-like in which the following one way function is used:
\begin{displaymath}
\begin{array}{lccc}
f_{\Hm,w} : &  S_{w} & \longrightarrow & \mathbb{F}_{2}^{n-k}\\
& \ev & \longmapsto & \ev\Hm^{T}
\end{array}
\end{displaymath}
where $\Hm \in \mathbb{F}_{2}^{(n-k)\times n}$ is a parity-check matrix of a $\lbrack n,k\rbrack$-code. Inverting this function means on an input $\sv$ (usually called a syndrome) to find an error $\ev$ of Hamming weight $w$ such that $\Hm \ev^{T} = \sv^{T}$. The general scheme is then defined as follows. We first suppose that we have a family of $\lbrack n,k\rbrack$-codes defined by a set of parity-check matrices $\mathcal{F}$ of size $(n-k) \times n$ such that for all $\Hm \in \Fc$ we have an algorithm $\mathcal{D}_{\Hm,w}$ which on input $\sv$ computes $\ev \in f_{\Hm,w}^{-1}(\sv)$. Then we pick uniformly at random $\Hsec \in \Fc$, an $n \times n$ permutation matrix $\Pm$, a non-singular matrix $\Sm \in \mathbb{F}_{2}^{(n-k) \times (n-k)}$ which define the secret and public key as:
$$
sk \leftarrow (\Hsec,\Pm,\Sm) \mbox{ } ; \mbox{ }  pk \leftarrow \Hpub \mbox{ where } \Hpub \eqdef \Sm \Hsec \Pm 
$$

This construction of $\Hpub$ is the standard method to scramble a code and originates from the original work of McEliece \cite{Mce78}.

\begin{remark} Let $\Csec$ be the code defined by $\Hsec$. Then the parity-check matrix $\Hpub$ represents the code
	$\mathcal{C}_{\textup{pub}} \eqdef \{ \cv \Pm : \cv \in \mathcal{C}_{\text{sec}} \}$ 
	with a basis picked uniformly at random thanks to $\Sm$. 
\end{remark}

Then, we select a cryptographic hash function $\mathcal{H} : \{0,1\}^{*} \rightarrow \mathbb{F}_{2}^{n-k}$ and a parameter $\lambda_{0}$ which lead to define algorithms \Sgnsk\ and \Vrfypk\ as follows
\begin{center}
	\begin{tabular}{l@{\hspace{3mm}}|@{\hspace{3mm}}l}
		$\Sgnsk(\mv)\!\!: \qquad \qquad \qquad$ & $\Vrfypk(\mv,(\ev',\rv))\!\!:$ \\
		$\quad \rv \Unif \{ 0,1 \}^{\lambda_{0}}$ &$\quad \sv \leftarrow \hash(\mv |\rv)$ \\
		$\quad \sv \leftarrow \hash(\mv |\rv)$ &$\quad w_{0} \leftarrow |\ev'|$ \\ 
		$\quad \ev \leftarrow \mathcal{D}_{\Hsec,w}(\Sm^{-1}\sv^{T})$ &$\quad \texttt{if } \Hpub \ev'^{T} =\sv^{T} \texttt{ and } w_0 = w \texttt{ return } 1$ \\
		$\quad \texttt{return}(\ev\Pm,\rv)$& $\quad \texttt{else return } 0 $\\			
	\end{tabular} 
\end{center}

To summarize, a signature of a message $\mv$ with the public key $(\Hpub,w)$ is a pair $(\ev,\rv)$ such that $\Hpub \ev^{T} = \mathcal{H}(\mv|\rv)^{T}$ with $|\ev| = w$

\begin{remark}  
	The use of a salt $\rv \in \{0,1\}^{\lambda_{0}}$ in algorithm {\Sgnsk} is made in order to have a tight security proof. In particular, this allows two signatures of a message $\mv$ to be different with high probability.
\end{remark}

\subsection{Code-Based Problems and computational assumtpions}
\label{subsec:cbProbs} 

We introduce in this subsection the code-based problems on which our security reduction in the {\QROM} will stand. The first is Decoding One Out of
Many (DOOM). Its classical version was first considered in \cite{JJ02} and later
analyzed in \cite{S11}. We will come back to its analysis in the quantum case in \S\ref{sec:qdoom}. As we are going to see, the best known algorithms
to solve this problem  are functions of the distance $w$.
Let us first consider the basic problem upon which all code-based cryptography relies.

\problemSD*

This problem has been studied for a long time and despite many efforts on this issue \cite{P62,S88,D91,MMT11,BJMM12,MO15,DT17a} the best known algorithms for solving this problem \cite{BJMM12,MO15} are exponential in the weight $w$ of $\ev$ as long as
$w = (1-\epsilon)(n-k)/2$ for any $\epsilon>0$. 
Furthermore when $w$ is sublinear in $n$, the exponent of the best known algorithms has not changed \cite{CS16} since the Prange algorithm \cite{P62} dating back to the early sixties. Moreover,  it seems very difficult to lower this 
exponent by a multiplicative factor smaller than $\frac{1}{2}$ in the quantum 
computation model as illustrated by 
\cite{KT17}.

However, in a context of code-based signatures an attacker may produce, say $q$, favorable messages and hash them to obtain $s_{1},\cdots,s_{q}$ syndromes on which he tries to solve one of the $q$ instances associated to Problem \ref{prob:SDH}. This brings us to introduce a different version of the SD problem.

\begin{problem}[$\DOOM_q$ -- Decoding One Out of Many]~\\
	\label{prob:DOOM}
	\begin{tabular}{ll}
		Instance: &\quad  $\Hm \in \mathbb{F}_{2}^{(n-k) \times n}$ ; $\sv_{1},\cdots,\sv_{q} \in \mathbb{F}_{2}^{n-k}$	; $w \in \{0,\cdots,n\}$ \\
		Output: &\quad  $(\ev,i) \in \mathbb{F}_{2}^{n}\times \llbracket 1,q \rrbracket$ of Hamming weight $w$ such that $\Hm\ev^{T} = \sv_{i}^{T}$.
	\end{tabular}
\end{problem}

The above problem can be defined for any $q \in \mathbb{N^*}$. This problem is of course easier than $\SD$ but can not be solved at most $q$ times faster than the $\SD$ problem. As it happens best algorithm gain much less than this $q$ factor. Also using the hardness of $\DOOM_q$ is appealing when performing security proofs in the QROM as it allows to avoid instance injection.

Moreover, an interesting feature of the above problem is that known algorithms to solve it fail to take advantage of very large values of $q$. Actually, depending on the parameters, there is a limit after which increasing $q$ does n'ot decrease the time.
Therefore, it is natural to define a variant where we do not limit a priori $q$. We also require the inputs $\sv_i$ to be the output of a random function instead of requiring to write them all. This allows to have a compact description of the inputs. This will also simplify the quantum security proof.

\problemQDOOM*

We study those problems in \S\ref{sec:qdoom}. In the classical setting, we can easily see that those problems are equivalent for sufficiently large values of $q$. We also present there the best known quantum algorithms for \QDOOM.

\begin{definition}[One-Wayness of \QDOOM]
	We define the quantum success of an algorithm $\cA$ against \QDOOM\ with the parameters $n,k,w$ as:
	\begin{align*} 
	QSucc_{\QDOOM}^{n,k,w}\left( \cA \right) = \mathbb{P}\big(  &\cA\left( \Hm,\mathcal{H} \right) \mbox{solution}  \big).
	\end{align*} 
	where $\Hm$ is chosen uniformly at random in $\mathbb{F}_{2}^{(n-k)\times n}$, $\cH$ a hash function in the {\QROM} which takes its values in $\mathbb{F}_{2}^{n-k}$ and the probability is taken over these choices of $\Hm$ and internal coins of $\cA$.

	The quantum computational success in time $t$ of breaking \QDOOM\ with the parameters $n,k,w$ is then defined as:
	\begin{displaymath}
	QSucc_{\QDOOM}^{n,k,w}(t) = \mathop{\max}\limits_{|\cA|\leq t} \left\{
	QSucc_{\QDOOM}^{n,k,w}\left( \cA \right) \right\}
	\end{displaymath}
	%We define in a similar fashion $QSucc^{n,k,w}_{\QDOOM}$.
\end{definition}

As we discussed in the introduction, it is appealing to consider the $\QDOOM$ problem as it will greatly improve our security reduction on the one side, but on the other side remains almost as hard as the SD problem.

\COMMENT{
	At that moment one may wonder why instances of {\QDOOM} are random matrices $\Hm$ while in code-based signature schemes we consider it may be used matrices build from a family $\mathcal{F}$ which has some structure? Indeed, try to sign a message $\mv$ for an adversary consists of computing $\sv \leftarrow\mathcal{H}(\mv|\rv)$ and then find $\ev$ of weight $w$ such that $\Hpub \ev^{T} = \sv^{T}$. We may answer as follows, the difficulty of SD for matrices $\Hpub$ is reduced to the problem to solve SD for random matrices or to be able to distinguish public and random matrices. This reduction enables us to split the security of code-based signature schemes into two points:
	\begin{itemize}
		\item Difficulty of solving {\QDOOM}.

		\item Difficulty of distinguishing public parity-check matrices and random matrices.
	\end{itemize}
	The first point is a well known and understood problem as it was explained above. For the second let us give some formal definitions.
	We will denote in the rest of the article by $\Hpub$ the random matrix chosen as the public parity-check matrix use in the signature scheme.
	Let us recall that it is obtained as
	$$
	\Hpub = \Sm \Hsec \Pm
	$$
	where $\Sm$ is chosen uniformly at random among the invertible binary matrices of size $(n-k)\times (n-k)$, $\Hsec$ is chosen uniformly at random among $\cF$ and $\Pm$ is chosen uniformly at random among the permutation matrices of size $n \times n$.
	We will denote by $\Dpub$ the distribution of the random variable $\Hpub$.
	On the other hand
	$\Drand$ will denote the uniform distribution over the parity-check matrices of all $\lbrack n,k\rbrack$-codes.

	\subsubsection{Why are those problems appealing for signature schemes in the QROM?}
}
\section{Basic security definitions}
\label{sec:preliminaries}

\subsection{Basic definitions}

A function $f(n)$ is said to be negligible if for all polynomials $p(n)$, $|f(n)| < p(n)^{-1}$ for all sufficiently large $n$. We will denote $negl(n)$ the set of negligible functions. 
The statistical distance between two discrete probability distributions over a same space $\mathcal{E}$ is defined as:
$$
\rho(\cD^0,\cD^1) \eqdef \frac{1}{2} \sum_{x \in \mathcal{E}} |\cD^0(x)-\cD^1(x) |.
$$
The following classical proposition on the statistical distance will be useful:
\begin{proposition}
	\label{prop:product}
	Let $(\cD^0_1,\dots,\cD^0_n)$ and $(\cD^1_1,\dots,\cD^1_n)$ be two $n$-tuples of discrete probability distributions where $\cD^0_i$ and $\cD^1_i$ are distributed over a same space $\cE_i$. 
	For $a \in \{0,1\}$, let us denote  by $\cD^a_1 \otimes \dots \otimes \cD^a_n$ the product probability distribution of $\cD^a_1,\dots,\cD^a_n$, that is $\cD^a_1 \otimes \dots \otimes \cD^a_n(x_1,\dots,x_n)=\cD^a_1(x_1) \dots \cD^a_n(x_n)$ with $x_i \in \cE_i$ for $i\in \{1,\dots,n\}$. In such a case we have
	\begin{displaymath}
	\rho\left(\cD^0_1 \otimes \dots \otimes \cD^0_n,\cD^1_1 \otimes \dots \otimes \cD^1_n \right) \leq \sum_{i=1}^n \rho(\cD^0_i,\cD^1_i).
	\end{displaymath}
\end{proposition}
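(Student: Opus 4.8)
The plan is to prove the bound by a standard hybrid (telescoping) argument, resting on two elementary facts about the statistical distance $\rho$: its triangle inequality, and the fact that $\rho$ is unchanged when both of its arguments are tensored with one and the same distribution. First I would record this second fact: for any two distributions $\cD^0,\cD^1$ over a space $\cE$ and any distribution $F$ over a space $\cE'$, one has $\rho(\cD^0 \otimes F, \cD^1 \otimes F) = \rho(\cD^0, \cD^1)$. This follows directly from the definition, since $\frac{1}{2}\sum_{x,y}|\cD^0(x)F(y) - \cD^1(x)F(y)| = \big(\frac{1}{2}\sum_x |\cD^0(x)-\cD^1(x)|\big)\big(\sum_y F(y)\big)$ and $\sum_y F(y) = 1$.

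Next I would introduce the hybrid distributions that interpolate between the two products by switching coordinates one at a time. For $j \in \{0,\dots,n\}$, set $H_j \eqdef \cD^1_1 \otimes \dots \otimes \cD^1_j \otimes \cD^0_{j+1} \otimes \dots \otimes \cD^0_n$, so that $H_0 = \cD^0_1 \otimes \dots \otimes \cD^0_n$ and $H_n = \cD^1_1 \otimes \dots \otimes \cD^1_n$ are exactly the two distributions appearing in the statement. The triangle inequality for $\rho$, applied along this chain, then gives $\rho(H_0, H_n) \le \sum_{j=1}^n \rho(H_{j-1}, H_j)$.

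Finally I would bound each term of that sum. The distributions $H_{j-1}$ and $H_j$ agree in every coordinate except the $j$-th, where $H_{j-1}$ carries $\cD^0_j$ and $H_j$ carries $\cD^1_j$; writing the common remaining factor as a single product distribution $F_j$, the invariance fact above yields $\rho(H_{j-1}, H_j) = \rho(\cD^0_j, \cD^1_j)$. Summing over $j$ gives the claimed inequality. There is no genuine obstacle here: the only point requiring a little care is the bookkeeping in the invariance lemma, namely grouping the $n-1$ unchanged coordinates into one factor and checking that their total mass factors out cleanly from the defining sum. An equivalent and equally clean route is a direct induction on $n$, peeling off the first coordinate and inserting the intermediate distribution $\cD^1_1 \otimes \cD^0_2 \otimes \dots \otimes \cD^0_n$; I would choose whichever phrasing fits the surrounding exposition best.
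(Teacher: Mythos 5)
Your proof is correct. The paper does not actually supply a proof of this proposition---it is stated as a classical fact about the statistical distance---so there is nothing to compare against; your telescoping hybrid argument, resting on the triangle inequality for $\rho$ and on the identity $\rho(\cD^0\otimes F,\cD^1\otimes F)=\rho(\cD^0,\cD^1)$, is the standard and complete way to establish it. The only cosmetic point is that in the step from $H_{j-1}$ to $H_j$ the unchanged coordinates sit on both sides of the $j$-th one rather than forming a single right-hand tensor factor, but the factoring computation is identical and you already flag this bookkeeping explicitly.
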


%We are now going to define the concept of distinguisher between two
%distributions and to relate it with the statistical distance.
%
%
%\begin{definition}[Distinguisher]
\noindent A distinguisher between two distributions $\mathcal{D}^{0}$ and
$\mathcal{D}^{1}$ over the same space $\mathcal{E}$ is a randomized
algorithm $\cA$ which takes as input an element of $\mathcal{E}$ that
follows the distribution $\mathcal{D}^{0}$ or $\mathcal{D}^{1}$
outputs $b \in \{0,1\}$. Such an $\cA$ is characterized by its advantage:
\begin{displaymath}
Adv^{\mathcal{D}^{0},\mathcal{D}^{1}}(\cA) \eqdef
\mathbb{P}_{\xi \sim \mathcal{D}^{0}}\left( \cA(\xi) \mbox{
	outputs } 1 \right) - \mathbb{P}_{\xi \sim
	\mathcal{D}^{1}}\left(\cA(\xi) \mbox{ outputs } 1
\right)
\end{displaymath}
where
$\mathbb{P}_{\xi \sim \mathcal{D}^{i}}\left(\cA(\xi) \mbox{
	outputs } 1 \right)$
is the probability that $\cA(\xi)$ outputs $1$ when its inputs
are picked according to the distribution $\mathcal{D}^{i}$ and for
each executions its internal coins are picked uniformly at random. We call
this quantity the advantage of $\cA$ against
$\mathcal{D}^{0}$ and $\mathcal{D}^{1}$.
%\end{definition}
%We are now able to define the computational distance between two
%distributions.

\begin{definition}
	[Quantum Computational Distance and Indistinguishability]
	The quantum computational distance between two distributions $\mathcal{D}^{0}$ and $\mathcal{D}^{1}$ in time $t$ is: 
	\begin{displaymath}
	\rho_{Qc}\left( \mathcal{D}^{0},\mathcal{D}^{1}\right)(t) \eqdef
	\mathop{\max}\limits_{ |\cA| \leq t} \left\{
	Adv^{\mathcal{D}^{0},\mathcal{D}^{1}}(\cA) \right\}
	\end{displaymath}
	where $|\cA|$ denotes the running time of $\cA$ on
	its inputs.
	
	The ensembles $\mathcal{D}^{0}=(\mathcal{D}^{0}_{n})$ and
	$\mathcal{D}^{1} = (\mathcal{D}_{n}^{1})$ are computationally
	indistinguishable in time $(t_{n})$ if their computational distance
	in time $(t_{n})$ is negligible in $n$.
\end{definition}
%In other words, the quantum computational distance is the best advantage that any quantum adversary could get in bounded time. It is well known that statistical distance is greater than
%quantum computational distance as the following theorem claims.
%
%\begin{proposition}\label{pr:comp_stat}
%	Let $\mathcal{D}^{0}$ and $\mathcal{D}^{1}$ be two distributions, then $\rho\left( \mathcal{D}^{0},\mathcal{D}^{1} \right)$ is the best advantage that any adversary could get, even with unbounded time:
%	\begin{displaymath}
%	\forall t, \quad \rho_{Qc}\left( \mathcal{D}^{0},\mathcal{D}^{1}
%	\right)(t) \leq \rho\left( \mathcal{D}^{0}, \mathcal{D}^{1}
%	\right).
%	\end{displaymath}
%\end{proposition}
\subsection{Digital signature security and games.}

For signature schemes one of the strongest security notion is {\em
	Quantum Existential Unforgeability under an adaptive Chosen Message Attack}
(QEUF-CMA). In other words, a quantum adversary has access to any signatures
of its choice and its goal is to produce a valid forgery. A valid
forgery is a message/signature pair $(\mv,\sigma)$ such that
$\Vrfypk(\mv,\sigma)=1$ whereas the signature of $\mv$ has never been
requested by the forger. Moreover the forger has access to quantum hash queries. By quantum hash queries we mean that adversaries can make a superposition of queries. In other words, a quantum access to a hash function $\mathcal{H}$ is an access to the following oracle:
\[ O_{\mathcal{H}} : \ket{\mv,\zv} \mapsto\ket{\mv,\zv \oplus \mathcal{H}(\mv)} \]
Let us now define the QEUF-CMA security of a signature scheme:

\begin{definition}
	[\textup{QEUF-CMA} Security] Let $\cS$ be a signature scheme.\\  A forger $\cA$
	is a $(t,\qhash,\qsig,\varepsilon)$-adversary in \textup{QEUF-CMA} against
	$\cS$ if after at most $\qhash$ quantum-queries to the hash oracle, $\qsig$
	classical-queries to signing oracle and $t$ working time, it outputs a valid forgery
	with probability at least $\varepsilon$.
	We define the \textup{QEUF-CMA} success probability against $\cS$ as:
	\begin{displaymath}
	QSucc_{\cS }^{\textup{QEUF-CMA}}(t,\qhash,\qsig) \eqdef
	\max \left( \varepsilon \mbox{} | \mbox{it exists a }
	(t,\qhash,\qsig,\varepsilon) \mbox{-adversary} \right).
	\end{displaymath}
	The signature scheme $\cS$ is said to be
	$(t,\qhash,\qsig)$-secure in \textup{QEUF-CMA} if the above success
	probability is a negligible function of the security parameter
	$\lambda$.
	
\end{definition}

In order to prove that a signature scheme is \textup{QEUF-CMA} under some assumptions we will use the paradigm of games. A good reference of this topic can be found in \cite{S04a}. The following game gives the \textup{QEUF-CMA} security:

\begin{definition}[challenger procedures in the \textup{QEUF-CMA} Game]
	Challenger procedures for the \textup{QEUF-CMA} Game corresponding to a signature scheme
	$\cS$ are defined as:
	\small
	\begin{center} \tt
		\begin{tabular}{|l|l|l|l|}
			\hline
			\underline{proc Initialize$(\lambda)$} & \underline{proc Hash$(\mv,\rv)$} & \underline{proc Sign$(\mv)$} & \underline{proc Finalize$(\mv,\sigma)$} \\ 
			$(pk,sk) \leftarrow \Gen(1^{\lambda})$ & {return} $\hash(\mv)$ & return $\Sgnsk(\mv)$ & return ($\Vrfypk(\mv,\sigma) = 1$) \\
			return $pk$ & & & \\
			\hline  
		\end{tabular}
	\end{center}
\end{definition}

\section{Quantum security of FDH-like code-base signature schemes}
\label{sec:securityProof} 
In this section, we show that code-based signature schemes we defined in \S\ref{sec:signatureScheme} are QEUF-CMA in the $\QROM$ against quantum adversaries. We redescribe the most important aspects of the scheme $S_{code}$ defined in \S\ref{sec:signatureScheme} so that the proof is easier to follow.

We have a family of $\lbrack n,k\rbrack$-codes defined by a set of parity-check matrices $\mathcal{F}$ of size $(n-k) \times n$ such that for all $\Hm \in \Fc$ we have an algorithm $\mathcal{D}_{\Hm,w}$ which on input $\sv$ computes $\ev \in f_{\Hm,w}^{-1}(\sv)$ where $f_{\Hm,w}$ is the function such that $f_{\Hm,w}(\ev) = \ev\Hm^{T}$. Then we pick uniformly at random $\Hsec \in \Fc$, an $n \times n$ permutation matrix $\Pm$, a non-singular matrix $\Sm \in \mathbb{F}_{2}^{(n-k) \times (n-k)}$. The secret and public key are:
$$
sk \leftarrow (\Hsec,\Pm,\Sm) \mbox{ } ; \mbox{ }  pk \leftarrow \Hpub \mbox{ where } \Hpub \eqdef \Sm \Hsec \Pm 
$$

\noindent The signing and verification procedures are then the following 

\begin{center}
	\begin{tabular}{l@{\hspace{3mm}}|@{\hspace{3mm}}l}
		$\Sgnsk(\mv)\!\!: \qquad \qquad \qquad$ & $\Vrfypk(\mv,(\ev',\rv))\!\!:$ \\
		$\quad \rv \Unif \{ 0,1 \}^{\lambda_{0}}$ &$\quad \sv \leftarrow \hash(\mv |\rv)$ \\
		$\quad \sv \leftarrow \hash(\mv |\rv)$ &$\quad w_{0} \leftarrow |\ev'|$ \\ 
		$\quad \ev \leftarrow \mathcal{D}_{\Hsec,w}(\Sm^{-1}\sv^{T})$ &$\quad \texttt{if } \Hpub \ev'^{T} =\sv^{T} \texttt{ and } w_0 = w \texttt{ return } 1$ \\
		$\quad \texttt{return}(\ev\Pm,\rv)$& $\quad \texttt{else return } 0 $\\			
	\end{tabular} 
\end{center}

\noindent Let us first recall and give definitions of distributions that will be used:
\begin{itemize}
	\item $\cU_{w}$ is the uniform distribution over $S_w$ (words of weight $w$).

	\item $\cU_{n-k}$ is the uniform distribution over $\F_2^{n-k}$.

	\item $\cD_w$ is the distribution of $\cD_{\Hsec,w}(\Sm^{-1}\sv^{T})$ when $\sv \Unif \mathbb{F}_{2}^{n-k}$ where $\cD_{\Hsec,w}(\cdot)$ is the algorithm used in $\cS_{\textup{code}}$ to invert $\ev \in S_{w}\mapsto \ev \Hm^{T}_{\textup{sec}}$.
	
	\item $\cD_{w}^{\Hpub}$ is the distribution of the syndrome $\Hpub \ev^{T}$ where $\ev$ is drawn uniformly at random in $S_{w}$

	\item $\Dpub$ is the distribution of public keys $\Hpub$.

	\item $\Drand$ is the uniform distribution over parity-check matrices of size $(n-k)\times n$. 
\end{itemize}

\noindent Our main security statement is the following

\begin{theorem}[Security Reduction]
	\label{theo:secRedu}
	Let $\mathcal{S}_{code}$ be the signature scheme defined in \S\ref{sec:signatureScheme} with security parameter $\lambda$. 
	Let $\qhash$ (resp. $\qsig$) be the number of queries to the hash
	(resp. signing) oracle. We also take $\lambda_0 = \lambda + 2\log_{2}(\qsig)$. For any running time $t$ we have 
	\begin{multline*}
	QSucc_{\cS_{\text{code}}}^{\textup{QEUF-CMA}}(t,\qhash,\qsig) \leq 
	2 \cdot QSucc_{\QDOOM}^{n,k,w}(2t) + \\ \rho_{Qc} \left( \Dpub ,\Drand \right)(2t) + \frac{8\pi}{\sqrt{3}} \qhash^{\frac{3}{2}} \sqrt{\mathbb{E}_{\Hpub} \left( \rho(\cD_{w}^{\Hpub},\mathcal{U}_{n-k}) \right)} + \qsig \rho\left( \cU_{w},\cD_{w}  \right)   + \frac{1}{2^\lambda}
	\end{multline*}
\end{theorem}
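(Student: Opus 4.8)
The plan is to proceed by a sequence of games (hybrids) starting from the real \textup{QEUF-CMA} game for $\mathcal{S}_{\text{code}}$ and ending in a game whose winning probability is controlled by $QSucc_{\QDOOM}^{n,k,w}$, so that each hop contributes exactly one of the terms in the bound. The two workhorses will be Zhandry's distributional lemma (Proposition \ref{propo:qdistrib}) for the quantum hash oracle and the product-distribution inequality (Proposition \ref{prop:product}) for the classically produced signatures. Following the philosophy of the introduction, I would never keep a table of reprogrammed values; instead I would replace the whole random oracle by a structured oracle of the form $\mathcal{H}(x)=(\Hpub\,g(x)^T)^T$, where $g$ is an auxiliary random oracle into $S_w$. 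Such an oracle is efficiently quantum-computable and lets the challenger emulate \textup{Sign} by simply returning $g(\mv|\rv)$, with no secret key.

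First I would deal with the salt. Since every signature draws a fresh $\rv \Unif \{0,1\}^{\lambda_0}$ with $\lambda_0=\lambda+2\log_2(\qsig)$, a birthday argument shows that the $\qsig$ salted messages $\mv|\rv$ are pairwise distinct (and distinct from the unsigned forgery point) except with probability at most $\qsig^2/2^{\lambda_0}=1/2^{\lambda}$: this is the additive $\frac{1}{2^\lambda}$ term. Next I would change the law of the returned signatures. In the real scheme the signature $\ev\Pm$ has distribution $\cD_w$ up to the permutation $\Pm$ (which leaves $\cU_w$ invariant), whereas the emulated signer draws $\ev'\Unif S_w$ and records the oracle value $\Hpub \ev'^{T}$ at $\mv|\rv$. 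Because the oracle value at a signed point is the deterministic image of the signature and the oracle is identical elsewhere, the entire view is a function of the $\qsig$ independent signatures; Proposition \ref{prop:product} then bounds this hop by $\qsig\,\rho(\cU_w,\cD_w)$ (a bound that holds even against a quantum distinguisher, since it is a statistical-distance bound on the underlying sampled data). Crucially, after this hop the emulated signer no longer reads the oracle nor uses $\Hsec,\Sm,\Pm$.

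I would then perform the oracle substitution. After the previous hop the oracle is uniform at unsigned points and $\cD_w^{\Hpub}$ at the $\qsig$ signed points; I replace the uniform law $\cU_{n-k}$ at the remaining points by $\cD_w^{\Hpub}$, so the oracle becomes the globally structured $\mathcal{H}=(\Hpub\,g(\cdot)^T)^T$ with i.i.d.\ outputs of law $\cD_w^{\Hpub}$, consistent with the recorded signer values. As the signer is now oracle-free, the only quantum access is through the $\qhash$ hash queries, and Proposition \ref{propo:qdistrib} applied with $\eps=\rho(\cD_w^{\Hpub},\cU_{n-k})$ bounds the change by $\frac{8\pi}{\sqrt{3}}\qhash^{3/2}\sqrt{\eps}$; averaging over $\Hpub$ and pushing the expectation inside the square root by concavity (Jensen) gives the term $\frac{8\pi}{\sqrt{3}}\qhash^{3/2}\sqrt{\mathbb{E}_{\Hpub}(\rho(\cD_w^{\Hpub},\cU_{n-k}))}$. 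Since signing is key-independent, I can now replace the public key drawn from $\Dpub$ by a uniform parity-check matrix drawn from $\Drand$; any gap between the two games yields a distinguisher for these ensembles, contributing $\rho_{Qc}(\Dpub,\Drand)(2t)$, the doubled running time absorbing the simulation of the rest of the game. In the resulting game $\Hpub$ is uniform and a valid forgery $(\mv^*,(\ev^{*},\rv^{*}))$ satisfies $\Hpub \ev^{*T}=\mathcal{H}(\mv^*|\rv^*)^T$ with $|\ev^{*}|=w$, i.e.\ it is a $\QDOOM$ solution with $\av=\mv^*|\rv^*$ for the structured oracle viewed as the $\QDOOM$ oracle; running this game inside a $\QDOOM$ solver (in time $2t$) together with a case distinction on whether the forged word coincides with the preimage implicitly used by the structured oracle accounts for the factor $2\cdot QSucc_{\QDOOM}^{n,k,w}(2t)$.

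\textbf{The main obstacle} is exactly the tension between the two halves of the argument: emulating \textup{Sign} forces the oracle value at a signed point to be a structured image $\Hpub \ev'^{T}$ of a word the challenger already knows, while extracting a $\QDOOM$ solution requires a forgery to be read against a genuine (uniform) syndrome. Reconciling this \emph{without} the usual costly per-point quantum reprogramming is what dictates the whole strategy: one must first make the signer oracle-free (so that Proposition \ref{propo:qdistrib} can be invoked with only $\qhash$ queries and the reprogramming is expressed globally as the single structured oracle encode$\,\circ\,g$), and then control the statistical quantity $\mathbb{E}_{\Hpub}(\rho(\cD_w^{\Hpub},\cU_{n-k}))$. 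This last quantity being small, i.e.\ the syndromes $\Hpub\ev^{T}$ for $\ev\Unif S_w$ being close to uniform, is precisely the property of the code family $\mathcal{F}$ that the scheme's hypotheses must guarantee, and it is the delicate point on which the tightness of the whole reduction rests.
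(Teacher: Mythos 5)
Your game sequence matches the paper's up to reordering, and the salt, signature-distribution, Zhandry/Jensen and key-replacement hops are essentially the ones the paper uses. But there is a genuine gap in the final extraction step, and it is precisely at the point you flag as ``the main obstacle'': you identify the tension but your construction does not resolve it. By replacing the oracle \emph{globally} by the structured oracle $x\mapsto \Hpub\, g(x)^{T}$, you sever the link to the $\QDOOM$ instance. A $\QDOOM$ solution must satisfy $\Hm_0\ev^{T}=\mathcal{H}(\av)^{T}$ for the \emph{genuine} random oracle $\mathcal{H}$ supplied with the instance, whereas a forgery in your final game only certifies $\Hm_0\ev^{*T}=\Hm_0\, g(\mv^{*}|\rv^{*})^{T}$. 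Your proposed ``case distinction on whether the forged word coincides with the preimage implicitly used by the structured oracle'' does not repair this: if $\ev^{*}=g(\mv^{*}|\rv^{*})$ the adversary has output a word the reduction already knew, and if $\ev^{*}\neq g(\mv^{*}|\rv^{*})$ you have found a weight-$\le 2w$ codeword of $\Hm_0$ --- neither is a $\QDOOM$ solution. The paper's missing ingredient is the hybrid oracle $Z$ built from an auxiliary oracle $J$ into $\F_2\times S_w$: on inputs with first bit $b=0$ the oracle $Z$ \emph{equals} the instance's $\mathcal{H}$, and only on inputs with $b=1$ is it the structured $\Hpub\ev^{T}$. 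Signing is emulated using only $b=1$ salts; extraction succeeds only on $b=0$ points; and because the forged message was never signed, the bit of $J(\mv^{*},\rv^{*})$ is independent of the adversary's view, so the forgery lands on a genuine $\mathcal{H}$-point with probability exactly $\tfrac12$. That independence argument is the entire source of the factor $2$ in front of $QSucc_{\QDOOM}^{n,k,w}(2t)$; your factor $2$ has no derivation.

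A secondary problem is the order of your hops. After your signature-distribution hop but before your oracle substitution, the oracle is $\cD_w^{\Hpub}$ at the $\qsig$ signed points and uniform elsewhere; this is not an oracle whose value at each input is i.i.d.\ from a fixed distribution, so Proposition \ref{propo:qdistrib} does not directly apply to the next hop --- and this intermediate object is exactly the per-point reprogramming you set out to avoid. The paper sidesteps both issues by substituting the oracle first (Game $2$: every output of $Z$ is i.i.d.\ from the mixture $\tfrac12\,\cU_{n-k}+\tfrac12\,\cD_w^{\Hpub}$, which is within $\rho(\cD_w^{\Hpub},\cU_{n-k})$ of uniform, so Zhandry's lemma applies cleanly) and only then changing the signer (Game $3$), which costs $\qsig\,\rho(\cU_w,\cD_w)$ by Proposition \ref{prop:product}.
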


In other words, signature schemes we introduced in \S\ref{sec:signatureScheme} can be reduced to the hardness of $\QDOOM$ in the {\QROM} if the family $\mathcal{F}$ and the signature scheme satisfy the following conditions:

\begin{restatable}{condition}{cdt}$ $
	\label{cdt} 
	
	\begin{enumerate} 
		
		\item $\frac{8\pi}{\sqrt{3}} \qhash^{\frac{3}{2}} \sqrt{\mathbb{E}_{\Hpub} \left( \rho(\cD_{w}^{\Hpub},\mathcal{U}_{n-k}) \right)}) \in negl(\lambda)$
		\item $\qsig \rho\left( \cU_{w},\cD_{w}  \right) \in negl(\lambda)$
		\item $\rho_{Qc} \left( \Dpub ,\Drand \right)(t) = o(\frac{t}{2^{\lambda}})$. 
	\end{enumerate}
\end{restatable}

The two first properties are properties of the code family $\mathcal{F}$ used while the third property is a property on the signing algorithms used: we require that signatures which are produced are indistinguishable from words uniformly and independently picked in $S_{w}$.

Notice that our security reduction is almost tight if the above holds. Indeed, we double the running and lose a factor $2$ in front of $QSucc_{\QDOOM}^{\textup{QEUF-CMA}}(t,\qhash,\qsig)$. This makes us lose $2$ bits of security. Actually, we could have a really tight reduction but it would involve a huge amount of quantum memory and access to quantum RAM. We wanted to construct an algorithm in our reduction in the most efficient way so we avoided this solution. We discuss this more at the end of the section. \\

\COMMENT{
	under a set of parameters which achieve the following points:
	\begin{enumerate}
		\item The problem $\QDOOM$ is hard against quantum adversaries.
		\item The distributions $\Drand$ and $\Dpub$ are quantum computationally indistinguishable.
		\item Signatures which are produced are indistinguishable of words uniformly and independently picked in $S_{w}$. 
		\item Syndromes $\Hpub \ev^{T}$ where $\ev$ is drawn uniformly at random in $S_{w}$ are in average on matrices $\Hpub$ indistinguishable of random words in $\mathbb{F}_{2}^{n-k}$. 
	\end{enumerate}
}

The goal of what follows is to prove Theorem \ref{theo:secRedu}. Our security reduction will go as follows: let $\cA$ be a $(t,\qsig,\qhash,\varepsilon)$-quantum adversary in the QEUF-CMA model against $\cS_{\textup{code}}$. 
Recall that in the QEUF-CMA model, we have a benign challenger and the following procedures

\begin{center} \small  \tt
	\begin{tabular}{|l|l|l|l|}
		\hline
		\underline{proc Initialize$(\lambda)$} & \underline{proc Hash$(\mv,\rv)$} & \underline{proc Sign$(\mv)$} & \underline{proc Finalize$(\mv,\ev,\rv)$} \\ 
		$(pk,sk,\lambda_{0}) \leftarrow \Gen(1^{\lambda})$ & {return} $\hash(\mv|\rv)$ & $\rv \Unif \{0,1\}^{\lambda_{0}}$ & $\sv \leftarrow {\tt Hash}(\mv,\rv)$ \\
		$sk \leftarrow (\Pm,\Sm,\Hsec)$ & & $\sv \leftarrow$ \texttt{Hash}$(\mv,\rv)$ & return \\
		$pk \leftarrow (\Hpub \eqdef \Sm \Hsec \Pm$& & $\ev \leftarrow \mathcal{D}_{\Hsec,w}(\Sm^{-1}\sv^{T})$ & $\Hpub \ev^{T} = \sv^{T} \wedge \wt{\ev} = w$  \\
		return $(\Hpub,w)$ & & return $(\ev\Pm,\rv)$ &  \\
		\hline  
	\end{tabular}
\end{center}

In this model, $\cA$ performs the following actions, that we model by a game:
\horizontall
\begin{center}
	\vspace*{-0.3cm}\textbf{Game 0} 
	\begin{enumerate}
		\item $\cA$ makes a call to $\Initialize(\lambda)$ and receives $\Hpub$.
		\item $\cA$ performs $\qsig$ calls to the $\Sign$ procedure. Let $\mv_i$ the message that $\cA$ wants to sign at query $i$ and let $\sigma_i$ the corresponding signature answered by the challenger.
		\item $\cA$ performs an algorithm that makes $\qhash$ calls to $\Hash$ and outputs $\mv',\ev',\rv'$
		\item $\cA$ wins if $\forall i, \ \mv_i \neq \mv'$ and $\Finalize(\mv',\ev',\rv') = 1$. This happens with probability $\eps$ and the whole running time is $t$.
	\end{enumerate}
\end{center}
\horizontall

Recall that procedure $\Sign$ is done by the challenger and $\cA$ queries the challenger. $\cA$ does not have access to the secret key and cannot run $\Sign$ by himself. Procedure $\Hash$ is public, efficient and is used both by the challenger and the adversary $\cA$.

Our security reduction will go as follows: from the adversary $\cA$, we will construct an algorithm $\cB$ to solve the $\QDOOM$ problem. The main part of the proof will be to replace the hash function $\cH$ (modeled by a random function from the QROM) by another hash function that we call $Z$. In Subsection \ref{subsec:Z} we show how to construct this function and in Subsection \ref{sec:securityProof3} we prove our main security statement.

\subsection{Constructing the hash function $Z$}
\label{subsec:Z}
Informally, we want the following properties for $Z$:
\begin{enumerate}
	\item $Z$ is statistically close to a random function in the QROM.
	\item $Z$ and $O_Z$ can be computed efficiently
	\item For any message $\mv$, there is an efficient algorithm to construct $\rv \in \F_2^{\lambda_0}$ and $\ev \in S_w$ such that $Z(\mv,\rv) = \Hpub\ev^{T}$ \emph{without knowing the secret key $\Sm,\Pm,\Hsec$}.
	\item With constant probability, $Z(\mv,\rv) = \mathcal{H}(\mv,\rv)$.	
\end{enumerate}

\noindent The first two properties will allow us to replace calls to $O_\cH$ with calls to $O_Z$ in $\cA$ without changing much the statistical distance of the output. The third property will then allow to change the signing oracle into one that can be done locally without knowing the secret key. The final property will still enforce that the algorithm $\cB$ we construct indeed solves the $\QDOOM$ problem.

\subsubsection{Construction of $Z$.}
Let $J$ be a cryptographic hash function that takes its values in $\F_2 \times S_w$. In particular, the first bit of $J(\mv,\rv)$ is a random element of $\mathbb{F}_{2}$. From the functions $J$ and $\cH$ we can build the function $Z : \F_2^{*} \rightarrow \mathbb{F}_2^{n}$ as follows: fix an input $(\mv,\rv)$ and let $(b,\ev) = J(\mv,\rv)$. If $b = 0$ then $Z(\mv,\rv) = \cH(\mv,\rv)$ else $Z(\mv,\rv) = \Hpub\ev^{T}$. We can easily construct an efficient quantum circuit for $O_Z$ using $O_\cH$ and $O_J$. For the running time of $O_Z$, we assume that the running time of $\mathcal{H}$ is roughly equivalent to the computing time of $(\Hpub\ev^{T})$ (if this is not the case, we can use a slower hash function $\mathcal{H}$ to match those $2$ times).

\begin{proposition}
	\label{Proposition:Z_close_to_H}
	For any $\Hpub$, outputs of $Z$ are at most at statistical distance $\rho(\cD_{w}^{\Hpub},\mathcal{U}_{n-k})$ to outputs of a random function in the $QROM$. 
	%where the probability is taken over matrices $\Hpub$.
\end{proposition}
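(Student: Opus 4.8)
The plan is to compute the output distribution of $Z$ at a single fixed input, compare it to the uniform distribution $\cU_{n-k}$, and then note that these per-input distributions are independent across inputs.

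First I would unfold the definition of $Z$ in the QROM. Since $J$ is modeled as a random function with values in $\F_2 \times S_w$, for a fixed input $(\mv,\rv)$ the pair $(b,\ev) = J(\mv,\rv)$ is uniform on $\F_2 \times S_w$; in particular $b=0$ and $b=1$ each occur with probability $\tfrac{1}{2}$, and conditioned on $b=1$ the word $\ev$ is uniform on $S_w$. Conditioned on $b=0$ the output is $\cH(\mv,\rv)$, which is uniform on $\F_2^{n-k}$ because $\cH$ is itself a random function, i.e.\ distributed as $\cU_{n-k}$. Conditioned on $b=1$ the output is $\Hpub \ev^{T}$ with $\ev$ uniform on $S_w$, i.e.\ distributed as $\cD_w^{\Hpub}$. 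Hence the output distribution of $Z$ at a fixed input is the mixture
$$ \cD_Z = \tfrac{1}{2}\,\cU_{n-k} + \tfrac{1}{2}\,\cD_w^{\Hpub}. $$

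Next I would bound $\rho(\cD_Z,\cU_{n-k})$. Writing $\cD_Z(\sv) - \cU_{n-k}(\sv) = \tfrac{1}{2}\bigl(\cD_w^{\Hpub}(\sv) - \cU_{n-k}(\sv)\bigr)$ and summing the absolute values gives $\rho(\cD_Z,\cU_{n-k}) = \tfrac{1}{2}\,\rho(\cD_w^{\Hpub},\cU_{n-k})$, which is in particular at most $\rho(\cD_w^{\Hpub},\cU_{n-k})$, the claimed bound. (The same estimate follows immediately from the joint convexity of the statistical distance applied to the mixture above, without any explicit computation.) Note that the exact value is actually a factor $2$ smaller than what the statement asks for, so the bound is comfortably met.

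Finally I would observe that the outputs of $Z$ at distinct inputs are independent and identically distributed: for $(\mv,\rv) \neq (\mv',\rv')$ the values $J(\mv,\rv), J(\mv',\rv')$ are independent, and likewise $\cH(\mv,\rv), \cH(\mv',\rv')$ are independent, because $J$ and $\cH$ are random functions. This i.i.d.\ structure, together with the single-input bound, is exactly the form required to feed $Z$ into Proposition~\ref{propo:qdistrib} with $\eps = \rho(\cD_w^{\Hpub},\cU_{n-k})$ in the subsequent reduction. There is no genuine obstacle here, as the content is just the mixture computation; the only point requiring care is being explicit that in the QROM both the auxiliary hash $J$ and the oracle $\cH$ are drawn as random functions, so that $b$ is an unbiased coin independent of $\ev$ and of $\cH(\mv,\rv)$, and that the per-input distributions are genuinely independent across inputs rather than merely identical marginals.
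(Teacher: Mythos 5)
Your proof is correct and follows essentially the same route as the paper's: condition on the bit $b$ of $J(\mv,\rv)$, observe that the output is uniform with probability $\tfrac{1}{2}$ and distributed as $\cD_w^{\Hpub}$ otherwise, and bound the distance of the resulting mixture to $\cU_{n-k}$. Your explicit computation even yields the sharper value $\tfrac{1}{2}\rho(\cD_w^{\Hpub},\cU_{n-k})$, which the paper's statement relaxes to $\rho(\cD_w^{\Hpub},\cU_{n-k})$.
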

\begin{proof} It directly follows from the definition of $Z$ and $\cD_{w}^{\Hpub}$ given above. Indeed, for any input $(\mv,\rv)$, if $J(\mv,\rv) = (0,\ev)$ then the output distribution is totally random and equal to $\mathcal{U}_{n-k}$. Otherwise, it follows the distribution of $\cD_{w}^{\Hpub}$. Each of these events happens with probability $\frac{1}{2}$ which concludes the proof.
\end{proof}

Moreover, for any message $\mv$, we can find $\rv \in \F_2^{\lambda_0}$ and $\ev \in S_w$ such that $Z(\mv,\rv) = \Hpub\ev^{T}$ with the following procedure: find $\rv_0$ such that $J(\mv,\rv_0) = (b,\ev_0)$ with $b=1$. This means that the running time of $O_Z$ is twice the running time of $O_{\mathcal{H}}$. This can be done with $2$ calls to $J$ on average. Output $\rv_0,\ev_0$ and notice that $\ev_0 = Z(\mv,\rv_0)$.

\subsection{Proof of Theorem \ref{theo:secRedu}} 
\label{sec:securityProof3}

\begin{proof}
	We present a sequence of games which initiates with Game $0$ presented at the beginning of this section and ends with an quantum algorithm solving the $\QDOOM$ problem. Let $(\Hm_{0},\mathcal{H})$ be an instance of the $\QDOOM$-problem for parameters $n,k,w$ given by $\cS_{\text{code}}$. We will denote by $\mathbb{P}\left( S_{i} \right)$ the probability of success of the game $i$. 
	\bigskip

	{\bf Game $1$} is identical to Game $0$ except that we change the winning condition. Let $F$ be the following failing event: there is a collision in a signature query ({\em i.e.} two signatures queries for a same message $\mv$ lead to the same salt $\rv$). The adversary wins Game $1$ only if $F$ does not occur additionally to the other requirements. A direct application of the birthday paradox gives $\mathbb{P} \left( F \right) \le \frac{1}{2^\lambda}$ and 
	\begin{displaymath}
	\mathbb{P}\left( S_{0} \right) \leq \mathbb{P}\left( S_{1} \right) - \mathbb{P} \left( F \right) \leq \mathbb{P}\left( S_{1} \right) - \frac{1}{2^\lambda}. 
	\end{displaymath}

	{\bf Game $2$.} 
	Here, we consider Game $1$ but both the adversary and the challenger use a different procedure {\tt
		Hash}. \COMMENT{We replace $\mathcal{H}$ by $Z$. From our previous discussion, this will allow to later emulate the signature queries. Moreover, since with probability $\frac{1}{2}$, it has the same output as $\mathcal{H}$,  The idea may be quickly described as follows: we are going to replace $\mathcal{H}$ by a function which outputs instances of $\QDOOM$ or syndromes that we know how to decode depending on a process that we control. In this way we will force the adversary, with a high probability, to solve $\QDOOM$ while ensuring to be able to answer to its signatures queries. 
		%	
		%	To achieve this goal we recall that $G$ follows the QROM that enables to a quantum access to the hash function $G$ meaning that we have access to the following oracle
		%	$$ O_G : \ket{\mv,\rv,\zv} \rightarrow \ket{\mv,\rv,\zv \oplus G(\mv,\rv)}. $$
		%			where $G$ is a function from $\F_2^{m + \lambda_0}$ to $\F_2^n$.
		For this purpose we are going to use the function $Z$, for which we have an efficient quantum oracle access $O_Z$, defined in \S\ref{subsec:Z}. Recall that $Z(\mv,\rv) = \Hpub\ev^{T}$ if $J(\mv,\rv) = (1,\ev)$ for some $\ev$ and $Z(\mv,\rv) = \mathcal{H}(\mv,\rv)^{T}$ otherwise.  }
	The $\Hash(\mv,\rv)$ procedure hence becomes: return $Z(\mv,\rv)$. A call to $O_{\Hash}(\ket{\psi})$ returns similarly $O_Z(\ket{\psi})$ for all $\ket{\psi}$. We can relate this game to the previous one through the following lemma.
	\begin{restatable}{lemma}{lemdistribi}
		\label{lem:distribi}
		\begin{displaymath}
		\mathbb{P}(S_{1})\leq \mathbb{P}(S_{2}) + \frac{8\pi}{\sqrt{3}} \qhash^{\frac{3}{2}}\sqrt{\mathbb{E}_{\Hpub} \left( \rho(\cD_{w}^{\Hpub},\mathcal{U}_{n-k}) \right)} 
		\end{displaymath}
	\end{restatable}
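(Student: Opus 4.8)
The plan is to isolate the single difference between the two games---Game~$1$ answers hash queries with the random oracle $\mathcal{H}$, whereas Game~$2$ answers them with $Z$---and to bound the resulting change in winning probability with Zhandry's Proposition~\ref{propo:qdistrib}. The natural first move is to condition on the public key $\Hpub$, since the distribution $\cD_{w}^{\Hpub}$ governing $Z$ depends on it. Once $\Hpub$ is fixed, I would view the entire QEUF-CMA experiment (adversary together with the challenger) as a single quantum algorithm whose one-bit output is the indicator of the winning event and which interacts with its environment only through queries to the hash oracle. The relevant number of such queries is $\qhash$ (the classical signing queries being a special case of quantum queries that are folded into the count), so this composite algorithm fits the template of Proposition~\ref{propo:qdistrib} with $q=\qhash$.

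Next I would check that $Z$ satisfies the hypotheses of Proposition~\ref{propo:qdistrib}, namely that its values are independent and identically distributed across inputs and close to uniform at each input. Independence across distinct inputs follows from the fact that $J$ and $\mathcal{H}$ are modelled as independent random oracles, so the pair $(b,\ev)=J(\mv,\rv)$---and hence $Z(\mv,\rv)$---is drawn afresh and independently at every input. For a fixed input, the value of $Z$ is by construction the mixture that is uniform with probability $\frac12$ and distributed as $\cD_{w}^{\Hpub}$ with probability $\frac12$, and Proposition~\ref{Proposition:Z_close_to_H} already records that its statistical distance to uniform is at most $\eps_{\Hpub}\eqdef\rho(\cD_{w}^{\Hpub},\mathcal{U}_{n-k})$. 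Substituting $\eps=\eps_{\Hpub}$ into Proposition~\ref{propo:qdistrib} then yields, for each fixed $\Hpub$,
\[
\mathbb{P}(S_{1}\mid\Hpub)-\mathbb{P}(S_{2}\mid\Hpub)\;\leq\;\frac{8\pi}{\sqrt{3}}\,\qhash^{\frac{3}{2}}\sqrt{\eps_{\Hpub}}.
\]

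Finally I would average this conditional bound over the random choice of $\Hpub$. Taking expectations gives $\mathbb{P}(S_{1})-\mathbb{P}(S_{2})\leq\frac{8\pi}{\sqrt{3}}\qhash^{\frac{3}{2}}\,\mathbb{E}_{\Hpub}\!\left[\sqrt{\eps_{\Hpub}}\right]$, and the one genuinely useful step is to move the expectation inside the square root: since $\sqrt{\cdot}$ is concave, Jensen's inequality gives $\mathbb{E}_{\Hpub}[\sqrt{\eps_{\Hpub}}]\leq\sqrt{\mathbb{E}_{\Hpub}[\eps_{\Hpub}]}$, which is precisely the quantity $\sqrt{\mathbb{E}_{\Hpub}(\rho(\cD_{w}^{\Hpub},\mathcal{U}_{n-k}))}$ appearing in the statement, completing the bound.

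I expect the main obstacle to be the bookkeeping around invoking Proposition~\ref{propo:qdistrib} honestly: one must argue that Games~$1$ and~$2$ are faithfully captured by the two oracle distributions of that proposition---in particular that the adversary's interaction, \emph{including} the challenger's signing-induced hash calls, is a legitimate $q$-query quantum algorithm of the required form and that the i.i.d.\ hypothesis truly holds for $Z$ (which is why the independence of $J$ and $\mathcal{H}$ is essential). The averaging and the Jensen step are then routine; the conceptual work lies entirely in the reduction to the hypotheses of this black-box indistinguishability proposition.
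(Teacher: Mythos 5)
Your proposal is correct and follows essentially the same route as the paper's proof: condition on $\Hpub$, invoke Proposition~\ref{Proposition:Z_close_to_H} to bound the per-input distance of $Z$ from uniform, apply Proposition~\ref{propo:qdistrib} to the whole experiment viewed as a $\qhash$-query oracle algorithm, and finish by averaging over $\Hpub$ with Jensen's inequality. Your explicit verification of the i.i.d.\ hypothesis (via the independence of $J$ and $\mathcal{H}$) is a point the paper leaves implicit, but the argument is the same.
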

	%	\begin{proof}
	%		\anote{convexity?}
	%		We know from Proposition \ref{Proposition:Z_close_to_H} that in the $QROM$, $Z$ is $\mathbb{E}\left(\rho(\Hpub(\mathcal{U}_{w}),\cU)\right)$ close to a random function. Game $2$ differs from game $1$ by replacing each call to $\Hash$ (resp. $O_{\Hash}$) by a call to $Z$ (resp. $O_Z$). Using Proposition \ref{propo:qdistrib}, the output state after game $2$ differs (in statistical distance) from the output state after game $1$ by at most $4q^2\sqrt{\mathbb{E}\left(\rho(\Hpub(\mathcal{U}_{w}),\cU)\right)}$. From there, we can conclude that 
	%$
	%		\mathbb{P}(S_{1})\leq \mathbb{P}(S_{2}) + 4q^{2} \sqrt{\eps_0}.
	%$
	%	\end{proof}	
	\begin{proof} It is clear that $\mathbb{P}(S_{1}) - \mathbb{P}(S_{2}) = \mathbb{E}_{\Hpub}\left( \mathbb{P}\left( S_{1} | \Hpub \right) -  \mathbb{P}\left( S_{2} | \Hpub \right) \right)$. Moreover if we fix $\Hpub$, we know from Proposition \ref{Proposition:Z_close_to_H} that in the $QROM$, outputs of $Z$ are at most at distance $\rho(\cD_{w}^{\Hpub},\cU_{n-k})$ from uniform. Game $2$ differs from game $1$ by replacing each call to $\Hash$ (resp. $O_{\Hash}$) by a call to $Z$ (resp. $O_Z$). Using Proposition \ref{propo:qdistrib}, the output state after game $2$ differs (in statistical distance) from the output state after game $1$ by at most $\frac{8\pi}{\sqrt{3}} \qhash^{\frac{3}{2}} \sqrt{ \rho(\cD_{w}^{\Hpub},\cU_{w})}$ which leads to:
		$$
		\mathbb{P}\left( S_{1} | \Hpub \right) -  \mathbb{P}\left( S_{2} | \Hpub \right) \leq \frac{8\pi}{\sqrt{3}} \qhash^{\frac{3}{2}} \sqrt{ \rho(\cD_{w}^{\Hpub},\cU_{w})}
		$$ 
		Then by concavity of the root function and Jensen's inequality we get:
		$$
		\mathbb{P}(S_{1}) - \mathbb{P}(S_{2}) \leq \frac{8\pi}{\sqrt{3}} \qhash^{\frac{3}{2}} \sqrt{ \mathbb{E}_{\Hpub} \left( \rho(\cD_{w}^{\Hpub},\cU_{w}) \right)}
		$$

	\end{proof}

	%	lemma given in (see\cite[Lemma 3]{BDF+11}.
	%	
	%	
	%	\begin{restatable}{lemma}{lemqdistrib}
	%		\label{lem:qdistrib} 
	%		Say $\mathcal{A}$ is a quantum algorithm that makes $q$ quantum oracle queries. Suppose further that we draw the oracle $O$ from two distributions. The first is the random oracle distribution. The second is the distribution of oracles where the value of the oracle at each input x is identically and independently distributed by some distribution $\mathcal{D}$ whose variational distance is within $\varepsilon$ from uniform. Then the variational distance between the distributions of outputs of $\mathcal{A}$ with each oracle is at most $4q^{2} \varepsilon$. 
	%	\end{restatable} 
	%
	%	We show in appendix how to emulate the lists $\listM$ in such a way
	%	that list operations cost, including its construction, is at most
	%	linear in the security parameter $\lambda$. Since $\lambda\le n$, it
	%	follows that the cost to a call to {\tt proc Hash} cannot exceed
	%	$O(n^2)$ and the running time of the challenger is
	%	$t_{c} = t +  O\left( \qhash \cdot n^{2} \right)$.	
	{\bf Game $3$} differs from Game $2$ by changing in {\tt proc
		Sign}. When it is queried $\mv$, the procedure ``$\ev\gets \mathcal{D}_{\Hsec,w}(\Sm^{-1}\sv^{T})$, return $(\ev\Pm,\rv)$'' is replaced by
	``find $(\ev,\rv)$ such that $J(\mv,\rv) = (1,\ev)$, return $(\ev,\rv)$''.

	Any signature $(\ev,\rv)$ produced by
	{\tt proc Sign} is valid. $J$ is modeled as a random function so the error $\ev$ is drawn according to
	the uniform distribution $\mathcal{U}_{w}$ while previously it was
	drawn according to the output distribution of $\cD_{\Hsec,w}$. We therefore have thanks to Proposition \ref{prop:product}
	\begin{displaymath}
	\mathbb{P} \left( S_{2}
	\right) - \mathbb{P} \left( S_{3}
	\right) \leq \qsig \rho(\cU_w,\cD_w)
	\end{displaymath}
	Moreover, to find $\rv$ such that $J(\mv,\rv) = (1,\cdot)$ we pick uniformly at random $\rv$ until finding it. As outputs of $J$ are uniformly distributed, we find such a $\rv$ in a constant time.\smallskip
	\bigskip
	
	{\bf Game $4$} is the game where in the initialize procedure, we replace the public matrix
	$\Hpub$ by $\Hm_{0}$, which is a totally random matrix in $\F_2^{(n-k) \times n}$.
	In this way we will force the adversary to
	build a solution of the \QDOOM\ problem. Here
	if a difference is detected between games it gives a
	distinguisher between the distribution $\Drand$
	and $\Dpub$:
	\begin{displaymath}
	\mathbb{P} \left( S_{3} \right) \leq \mathbb{P} \left( S_{4} \right) + \rho_{Qc} \left( \Dpub,\Drand \right)\left(2t \right).   
	\end{displaymath}
	
	{\bf Game $5$} differs in the finalize procedure as follows:
	\begin{center} 
		\begin{tabular}{|l|}
			\hline
			\underline{{\tt proc Finalize}}$(\mv ,\ev,\rv)$ \\
			$\sv\gets {\tt Hash}(\mv,\rv)$ \\
			$b \leftarrow \Hpub\ev^{T} = \sv^{T} = 0 \wedge |\ev| = w$ \\  
			$(b',\ev) = J(\mv,\rv)$ \\ 
			return $b \wedge (b'==0)$ \\
			\hline
		\end{tabular}
	\end{center}
	
	We assume the forger outputs a valid signature $(\ev,\rv)$ for
	the message $\mv $. The probability of success of Game $5$ is the
	probability of the event ``$S_4 \wedge(J(\mv,\rv) = (0,\ev))$''.
	
	If the forgery is valid, the message $\mv $ has never been queried by
	{\tt Sign}, and the adversary never had access to any
	output of $J(\mv,\cdot)$. This way, the two events are
	independent and we get:
	$$
	\mathbb{P} \left( S_{5} \right) = \mathbb{P}_{\mv,\rv} \left( J(\mv,\rv) = (0,\ev) \right) \cdot  \mathbb{P} \left( S_{4} \right) = \frac{1}{2} \mathbb{P} \left( S_{4} \right).
	$$
	The probability $\mathbb{P} \left( S_{5} \right)$ is then exactly the probability for $\cA$ to output $\mv,\rv$ and $\ev \in S_{w}$ such that $\Hm_{0}\ev^{T} = \mathcal{H}(\mv,\rv)^{T}$ which gives
	\begin{align*}\label{eq:upper_bound_p5}
	\mathbb{P} \left( S_{5} \right) \leq QSucc_{\QDOOM}^{n,k,w}(2t).
	\end{align*} 
	as we know thanks to the output a preimage $(\mv,\rv)$ of the solution of the decoding problem. 
	This  concludes the proof of Theorem \ref{theo:secRedu} by combining this together with all the bounds obtained for each of the previous games.
\end{proof} 

\subsection*{Why do use the random function $Z$ to reprogram our random oracle?}
We just want to briefly mention why we use an extra function $J$ to reprogram our (quantum) random oracle. We could have just, for the $q$ values we use, reprogram the function $\mathcal{H}$ accordingly, as it is done for example in \cite{ABB+17}. However, this actually requires $q$ extra quantum bits of memory (recall that $q = 2^\lambda$ and can be very large) as well as an efficient quantum data structure that would act as a quantum RAM. However, we do not have yet efficient models of quantum RAM, as shown in \cite{AGJ+15}. We do not want to go to deep in the discussion whether such data structures in the quantum model should be allowed or not, this is work for future research. However, we want to be in the safe side of things by not allowing here this kind of data structures. This means in particular that our reduction from the adversary $\mathcal{A}$ that breaks the signature scheme to the algorithm $\mathcal{B}$ that solves the {\QDOOM} problem not only preserves essentially the quantum time but also more generally the quantum resources used, in particular quantum memory.
\section{{\QDOOM} Study}
\label{sec:qdoom} 
	We study here the best known quantum algorithms to solve $\QDOOM$. They all come from an old algorithm due to Prange \cite{P62} and are known as Information Set Decoding (ISD). These kind of algorithms were first thought to solve the SD problem. The current state-of-the-art to solve the ${\DOOM_q}$ and {$\DOOM_{\infty}$} slightly adapt them. In this way we are first going to describe general a skeleton of ISDs and quantum algorithms in this setting. Moreover, during our discussion we  will give several reasons on why we think it is difficult to improve significantly quantum algorithms using ISDs.

\subsubsection{Notations} We provide here some notations that will be used throughout this section. Let $\Hm$ be a matrix of size $(n-k) \times n$ in $\mathbb{F}_{2}$ and $I = \{i_{1},\cdots,i_{p}\} \subseteq \{1,\cdots,n\}$. We define the permutation $\pi_{I}$ as:
$$
\pi_{I}(i_{j}) = j \mbox{ for } 1 \leq j \leq p \mbox{ and } \pi_{I}(j) = j \mbox{ otherwise}
$$ 
and $\Pm_{\pi_{I}}$ its associated matrix. Then $\Hm_{\pi_{I}}$ will denote $\Hm \Pm_{\pi_{I}}$. All quantities we are interested in are functions of the code-length $n$ and we will write $f(n) = \tilde{O}(g(n))$ when there exists a constant $C$ such that $f(n) = O\left( \log_{2}^{C}(g(n))\cdot g(n) \right)$ and $f(n) = \Theta\left(g(n)\right)$ when there exists two constants $m,M$ such that $mg(n) \leq f(n) \leq Mg(n)$.

\subsection{Information Set Decoding - ISD}

Let us first recall that algorithms we will study were thought to solve the following problem:
\problemSD*

\noindent Existing literature in the study of algorithms solving $\SD$ usually assumes that there is a unique solution as for instance in a context of encryption the ciphertext of $\ev$ is $\Hm \ev^{T}$ (see \cite{N86}) which imposes to have an injective construction. In the case of code-based signature schemes we introduced in \S\ref{subsec:scheme}, the weight $w$ is chosen greater than the Gilbert-Varshamov bound, namely 
$d_{\textup{GV}}(n,k) \eqdef n h^{-1}(1-k/n)$ where $h(x) \eqdef -x \log_2(x) -(1-x) \log_2(1-x)$ and $h^{-1}(x)$ is the inverse function defined for $x$ in 
$[0,\frac{1}{2}]$ and ranging over $[0,1]$. It represents the weight $w$ for which we can typically expect that SD admits one solution, beyond it there typically exits an exponential number of solutions and below it no solution. We need to choose $w$ greater than this bound in order to be able to invert the function $\ev \in S_{w} \mapsto \ev\Hm^{T}$ on all words of $\mathbb{F}_{2}^{n-k}$. More precisely, the following proposition gives the number of solutions which are expected:

\begin{proposition} 
	\label{propo:numbS} 
	Let $w$ be an integer and $\sv \in \mathbb{F}_{2}^{n-k}$, then there exists in average $M_{n,k,w} \eqdef \frac{\binom{n}{w}}{2^{n-k}}$ solutions to $\SD$ where probabilities are taken by picking matrices $\Hm$ uniformly at random in $\mathbb{F}_{2}^{(n-k)\times n}$. 
\end{proposition}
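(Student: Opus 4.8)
The plan is to compute the expected number of solutions by linearity of expectation over the individual candidate error vectors of weight $w$. First I would introduce the random variable counting solutions, namely $N \eqdef \sum_{\ev \in S_w} \mathbf{1}\left[ \Hm\ev^T = \sv^T \right]$ where $\Hm$ is drawn uniformly at random in $\F_2^{(n-k)\times n}$, so that $\esp(N) = \sum_{\ev \in S_w} \mathbb{P}\left( \Hm\ev^T = \sv^T \right)$. The whole statement then reduces to evaluating the single probability $\mathbb{P}\left( \Hm\ev^T = \sv^T \right)$ for a fixed $\ev \in S_w$.

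The core (and only substantive) step is the uniformity claim: for any fixed nonzero $\ev \in \F_2^n$, the syndrome $\Hm\ev^T$ is uniformly distributed over $\F_2^{n-k}$ when $\Hm$ is uniform. To establish this I would write $\mathbf{h}_1,\dots,\mathbf{h}_{n-k}$ for the rows of $\Hm$, which are mutually independent and each uniform over $\F_2^n$. The $j$-th coordinate of $\Hm\ev^T$ is the inner product $\langle \mathbf{h}_j, \ev \rangle$, and since $\ev \neq \mathbf{0}$, exactly half of the choices of $\mathbf{h}_j$ yield $0$ and half yield $1$, so this coordinate is an unbiased bit. Because the rows are independent, the $n-k$ coordinates are independent unbiased bits, hence $\Hm\ev^T$ is uniform over $\F_2^{n-k}$ and $\mathbb{P}\left( \Hm\ev^T = \sv^T \right) = \frac{1}{2^{n-k}}$ for every fixed $\ev \in S_w$ with $w \geq 1$.

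Combining the two steps gives $\esp(N) = |S_w| \cdot \frac{1}{2^{n-k}} = \frac{\binom{n}{w}}{2^{n-k}} = M_{n,k,w}$, using $|S_w| = \binom{n}{w}$, which is exactly the claimed average number of solutions.

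I do not expect any genuine obstacle here: the argument is a one-line linearity-of-expectation computation once the uniformity of $\Hm\ev^T$ is in hand, and that uniformity is elementary. The only point requiring a word of care is the degenerate case $w = 0$, where $\ev = \mathbf{0}$ forces $\Hm\ev^T = \mathbf{0}$ deterministically and the count is $0$ or $1$ according to whether $\sv = \mathbf{0}$; the proposition is to be read for $w \geq 1$, which is the only regime relevant to the decoding problem $\SD$.
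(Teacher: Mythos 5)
Your proof is correct, and it is the standard argument: linearity of expectation over the $\binom{n}{w}$ candidate error vectors, combined with the observation that $\Hm\ev^{T}$ is uniform over $\F_2^{n-k}$ for any fixed nonzero $\ev$ when $\Hm$ is drawn uniformly. The paper in fact states this proposition without any proof (treating it as folklore), so there is nothing to compare against; your write-up, including the remark on the degenerate case $w=0$, supplies exactly the justification the paper omits.
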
 

\begin{remark} Asymptotically $\binom{n}{w} = \tilde{O} \left( 2^{n \cdot h(w/n)} \right)$, then Gilbert-Varshamov's bound easily gives the weight for which we expect in average one solution to $\SD$. 
\end{remark}

In the following we will consider weights $w$ greater than $d_{\textup{GV}}(n,k)$ and we will have to take into account $M_{n,k,w}$ in our study.

\subsubsection{The Prange Algorithm.} Let us first consider a $\lbrack n,k \rbrack$-code $\mathcal{C}$ with parity-check matrix $\Hm \in \mathbb{F}_{2}^{(n-k)\times n}$ and a syndrome $\sv \in \mathbb{F}_{2}^{n-k}$. 
The matrix $\Hm$ is a full-rank, therefore we can choose uniformly at random a set $I \subseteq \{1,\cdots,n\}$ of size $n-k$, usually called an information set, such that, with a high probability, $\Hm$ restricted to these positions is an invertible matrix. In other words we have $\Hm_{\pi_{I}} = \lbrack \Am|\Bm\rbrack$ where $\Am \in \mathbb{F}_{2}^{(n-k)\times (n-k)}$ is non-singular. We look now for $\ev$ of the form $\ev_{\pi_{I}} = (\ev'|\mathbf{0}_{k})$. We should therefore have $\sv^{T} = \Hm \ev^{T} = \Am \ev'^{T}$. Then thanks to Gaussian elimination, which is done in polynomial time, we compute $\ev'^{T}=\Am^{-1} \sv^{T}$. In this way, if the weight of $\ev = (\ev',\mathbf{0}_{k})_{\pi_{I}^{-1}}$ is $w$, we just found a solution, otherwise we pick an other set $I$ of $n-k$ positions. 
Thus, the hard part of this algorithm consists of finding the good set of positions. It can be shown that the probability to find a fixed error of weight $w$ during an iteration is given by $p_{prange} \eqdef \frac{\binom{n-k}{w}}{\binom{n}{w}}$ (it relies among other things on a counting argument over information sets). As it is explained above there is an exponential number $M_{k,n,w}$  (see Proposition \ref{propo:numbS}) errors $\ev$ of weight $w$ such that $\Hm \ev^{T} = \sv^{T}$. In this way, under the assumption (which is a classical one in the study of ISDs) that solutions to SD behave independently of the set $I$ we pick,
the average probability (on matrices $\Hm$) to not find any solution during an iteration is $(1-p_{prange})^{M_{k,n,w}}$ which implies a probability of succeed during one iteration:
$$
P_{prange} \eqdef 1 - (1-p_{prange})^{M_{k,n,w}} = \Theta \big( \min \left( 1, M_{n,k,w} \cdot p_{prange} \right) \big) 
$$
$\mbox{ where } M_{n,k,w} \cdot p_{prange} = \frac{\binom{n-k}{w}}{2^{n-k}}$. Thus, Prange's algorithm will make on average $\tilde{O}\left( 1
/P_{prange} \right)$ samples which gives its complexity as the Gaussian elimination is polynomial and it is easily verified that for all $w$ such that $d_{\textup{GV}}(n,k) \leq w < (n-k)/2$, $1/P_{prange}$ is exponential in the code length.

\subsubsection{Quantum quadratic speedup of the Prange algorithm.} There is a direct quantum quadratic speedup which consists to apply Grover's algorithm to find the right information set. It leads to a quantum complexity of $\tilde{O} \left( 1/\sqrt{P_{prange}} \right)$.

\subsubsection{Generalized information set decoding.} The Prange algorithm has been improved in \cite{S88,D91} by relaxing a little bit the constraint on the set of columns we pick: it allows to have a little bit more than $0$ errors in the complementary of the information set $I$. To perform this task, the algorithm introduces two new parameters $p,l$ and looks for an error of the form $(\ev'|\ev'')$ where the right side has size $k+l$, $|\ev'| = w-p$, $|\ev''| = p$ with $\ev'$ uniquely determined by $\ev''$. More precisely, the improved algorithm first picks a set $I \subseteq \{1,\cdots,n\}$ of size $n-k-l$, then performs a Gaussian elimination on lines of $\Hm_{\pi_{I}}$ which gives a non-singular matrix $\Um$, as well as matrices $\Hm_{I}' \in \mathbb{F}_{2}^{(n-k-l)\times (k+l)}$ and $\Hm''_{I} \in \mathbb{F}_{2}^{l \times (k+l)}$ such that 
\begin{equation} 
\label{eq:1} 
\Um \Hm_{\pi_{I}} = \begin{pmatrix}
\mathbf{Id}_{n-k-l} & \Hm_{I}' \\
\mathbf{0} & \Hm_{I}'' \\ 
\end{pmatrix}
\end{equation} 
and 
\begin{equation} 
\label{eq:2} 
\Um \sv^{T} = (\sv_{I}'|\sv_{I}'')^{T} \mbox{ where } \sv_{I}' \in \mathbb{F}_{2}^{n-k-l} \mbox{ and } \sv_{I}'' \in \mathbb{F}_{2}^{l}.
\end{equation} 
Then if $\ev$ is a vector such that $\ev_{\pi_{I}} = (\ev'|\ev'')$ we have:
\begin{align*}
\Hm\ev^{T} = \sv^{T} &\iff \Um \Hm \ev^{T} = \Um \sv^{T} \\
&\iff  \begin{pmatrix}
\mathbf{Id}_{n-k-l} & \Hm' \\
\mathbf{0} & \Hm'' \\ 
\end{pmatrix} \ev_{\pi_{I}}^{T} = \begin{pmatrix} \sv'^{T} \\\sv''^{T} \end{pmatrix} \\
&\iff   \begin{pmatrix} \ev'^{T} + \Hm'\ev''^{T} \\ 
\Hm''\ev''^{T} \end{pmatrix} = \begin{pmatrix} \sv'^{T} \\\sv''^{T} \end{pmatrix} \\ 
&\iff \ev'^{T} = \Hm'\ev''^{T} + \sv'^{T} \mbox{ and } \Hm''\ev''^{T} = \sv''^{T} 
\end{align*}
In this way, we
compute all errors $\ev''$ of weight $p$ such that $\Hm''\ev'^{T} = \sv''^{T}$, for all vectors we get, we consider $\ev_{s} \eqdef (\ev''\Hm'^{T} + \sv' |\ev'')_{\pi_{I}^{-1}}$ and if one of them has a Hamming weight of $w$ then it is a solution, otherwise we pick another set of size $n-k-l$. Let us introduce now, for each subset $I$ we picked and syndrome $\sv$ we look to decode, the set: 
\begin{equation}
\label{eq:3}
\cS_{I} = \{ \ev'' \in \mathbb{F}_{2}^{k+l} \mbox{ of Hamming weight } p \mbox{ } : \mbox{ } \Hm_{I}'' \ev''^{T} = \sv_{I}''^{T} \}
\end{equation} 
\begin{equation}
\label{eq:4} 
f_{I} : \ev'' \in \mathbb{F}_{2}^{k+l} \mapsto \ev'' \Hm_{I}''^{T} \in  \mathbb{F}_{2}^{l} 
\end{equation} 
\begin{equation}
\label{eq:5} 
z_{I}^{\sv} : \ev'' \in \mathbb{F}_{2}^{k+l} \mapsto (\ev''\Hm_{I}'^{T}+\sv_{I}'|\ev'')_{\pi^{-1}} \in \mathbb{F}_{2}^{n}
\end{equation}
Thanks to equations \eqref{eq:1},\eqref{eq:2},\eqref{eq:3},\eqref{eq:4},\eqref{eq:5} we are able to formalize generalizes ISDs in Algorithm \ref{algo:isd}. 
\begin{algorithm}
	\caption{(generalized) ISD}\label{algo:isd}
	\begin{algorithmic}[1]
		\State {\bf input:} $\Hm \in \F_{2}^{(n-k)\times n}, \sv \in
		\F_{2}^{(n-k)}, l,p,w \mbox{ integers}$ 
		\Loop 
		\State pick a set $I \subseteq \{1,\cdots,n\}$ of size $n-k-l$ 
		\State compute $\Hm'_{I},\Hm''_{I}$, $\sv_{I}',\sv_{I}''$
		\State compute $\cS_{I}$
		\ForAll{$\ev''\in\cS_{I}$}
		\State
		$\ev \leftarrow h_{I}(\ev'')$ 
		\If{$\wt{\ev}=w$} {\bf output} $\ev$
		\EndIf
		\EndFor
		\EndLoop
	\end{algorithmic}
\end{algorithm}
\begin{remark} From each information set $I$ we can build matrices $\Hm_{I}'$, $\Hm_{I}''$, $\sv_{I}$ and $\sv_{I}''$ in polynomial time thanks to Gaussian elimination. 
\end{remark} 
This new algorithm leads to a probability $p_{p,l} \eqdef \frac{\binom{k+l}{p}\binom{n-k-l}{w-p}}{\binom{n}{w}}$ ($\geq p_{prange}$ for a set of parameters $p,l$) of finding a fixed solution. 
\begin{remark}We stress that to have this probability the algorithm has to consider all errors $\ev''$ of weight $p$ such that $\Hm'' \ev''^{T} = \sv''^{T}$.
\end{remark} 
\noindent In a same fashion as before this algorithm will succeed with probability:
$$
P_{p,l} \eqdef 1 - (1-p_{p,l})^{M_{k,n,w}} = \theta \big( \min \left( 1, M_{n,k,w} \cdot p_{p,l} \right) \big)  
$$
and if we denote by $T_{class}$ the time complexity to compute $S_{I}$, which is exponential as the size of $S_{I}$ is exponential, Algorithm \ref{algo:isd} has a complexity given by:
\[ \tilde{O}\left( \frac{T_{class}}{P_{p,l}} \right) \]

%Let us introduce now some definitions that will be useful in the following.

%\begin{definition}
%	 \label{def:isd} 
%	 Let parameters $p,l$, a parity-check matrix $\Hm \in \mathbb{F}_{2}^{(n-k)\times n}$ off full rank, a syndrome $\sv \in \mathbb{F}_{2}^{n-k}$ and $I \subseteq \{1,\cdots,n\}$ of size $n-k-l$. 
%	
%	
%	We define matrices $\Hm'_{I}$ and $\Hm''_{I}$ of size $(n-k-l)\times(k+l)$ and $l \times (k+l)$ such that it exists an invertible matrix $\Um$ of size $(n-k)\times(n-k)$ which gives:
%	$$
%	\Um \Hm_{\pi_{I}} =\begin{pmatrix}
%	\mathbf{Id}_{n-k-l} & \Hm_{I}' \\
%	\mathbf{0} & \Hm_{I}'' \\ 
%	\end{pmatrix}.
%	$$
%	syndromes $\sv_{I}',\sv_{I}''$ such that $\Um \sv^{T} = (\sv_{I}',\sv_{I}'')^{T}$. We then define functions: 
%	$$
%	f_{I} : \ev'' \in \mathbb{F}_{2}^{k+l} \mapsto \ev'' \Hm_{I}''^{T} \in  \mathbb{F}_{2}^{l} \quad ; \quad h_{I}^{\sv} : \ev'' \in \mathbb{F}_{2}^{k+l} \mapsto (\ev''\Hm_{I}'^{T}+\sv_{I}'|\ev'')_{\pi^{-1}} \in \mathbb{F}_{2}^{n}
%	$$
%	and the following set:
%	$$
%	\cS_{I} = \{ \ev'' \in \mathbb{F}_{2}^{k+l} \mbox{ of Hamming weight } p \mbox{ } : \mbox{ } \Hm_{I}'' \ev''^{T} = \sv_{I}''^{T} \}
%	$$
%	
%\end{definition}  
%
%
%We give in Algorithm \ref{algo:isd} the skeleton of ISDs with the above definition. 

Many classical algorithms have been proposed to solve Instruction 5 (see \cite{S88,D91,MMT11,BJMM12,MO15}). They all rely on splitting the matrices even more and finding elements $S_I$ via multi-collision algorithms. In the case of $\QDOOM$, similar ideas are applied. We generate several syndromes $\sv_1,\dots,\sv_q$. When performing the generalized ISD algorithm, we now have one set $S_I$ for each syndrome $\sv_q$. The multi-collision algorithms used in the ISD can take advantage of this in order to find all elements of all the $S_I$ (for different syndromes) in a reduced amortized cost. In this case, as we consider more good events, we obtain
$$
P_{p,l} = 1 - (1-p_{p,l})^{M_{k,n,w}} = \theta \big( \min \left( 1, q \cdot M_{n,k,w} \cdot p_{p,l} \right) \big)  
$$
Of course, in this case, the computing $T_{class}$ changes and new optimizations have to be done. We will not go into the details of these algorithms and optimizations (see \cite{S11} for more details).

The best asymptotic exponent among all those decoding techniques are \cite{MO15,BJMM12} for SD. However, algorithm \cite{MO15} is penalized by a big polynomial overhead which makes it more expensive that \cite{BJMM12}. It is why in the following table we will consider asymptotic exponents given by \cite{BJMM12}.  We give in Table \ref{table:classExp} classical exponents in base 2 of the Prange algorithm (which was the first algorithm proposed to solve syndrome decoding problem), \cite{BJMM12} and the state-of-the-art to solve {\DOOM}$_{\infty}$ (see \cite{S11}). We present the running times for $k = n/2$ and for two error weights $w$: namely $w \approx 0.11n$ which corresponds to the Gilbert-Varshamov weight and is the weight around which those problems are the hardest; and $w \approx 0.191n$ which corresponds to the weight used in the {\DST} signature scheme.

\setcounter{table}{0}

\begin{table}[H]
	\centering
	\begin{tabular}{|c||c|c|c|}\cline{2-4}
		\multicolumn{1}{c|}{} & \multicolumn{3}{|c|}{Classical asymptotic exponent in base $2$ (divided by $n$) } \\
		\hline
		$w/n$ & \qquad SD (Prange) \qquad \qquad  & \qquad SD (\cite{BJMM12}) \qquad \qquad  & $\QDOOM$ \cite{S11} \\
		\hline
		\hline 
		0.11 & 0.1199
		& 0.1000 & 0.0872 \\
		0.191 & 0.02029 & 0.01687 & 0.01654  \\
		\hline
	\end{tabular}
	\vspace{0.5cm}
	\caption{Asymptotic exponent for classically solving SD and {\DOOM}$_{\infty}$ for
		$k/n=0.5$}
\end{table}
\vspace*{-0.8cm}

The above table contains classical asymptotic exponent in base $2$ (divided by $n$). This means for example that the Prange algorithm for SD with $w = 0.11n$ runs in time $2^{0.1199n}$. \\

In the quantum setting, things become trickier. While Instruction 3 can be Groverized, it seems hard to get a full quadratic speedup for Instruction 5, because multi-collision problems have a less than quadratic speedup in the quantum setting. If $T_{quant}$ is the quantum running time of Instruction 5 then the total running time becomes $\tilde{O}\left( \frac{T_{quant}}{\sqrt{P_{p,l}}} \right)$. Moreover, any improvement we do in Instruction 5 seems to augment ${P_{p,l}}$ and therefore reduce the Grover advantage we have from Instruction 3. There seems to be very little place for improvement. In \cite{KT17}, authors still managed to find a quantum improvement over the simple quantum Prange algorithm using quantum random walks, even though the advantage is small.

\COMMENT{We can still perform Grover's algorithm on Instruction 3 but now the function for which we look for a root has an execution time $T_{quant}$ which is non-polynomial: Instruction 5 which consists in finding all errors of weight $p$ such that $\Hm''_{I}\ev''^{T} = \sv_{I}''^{T}$. Indeed for this problem, as it is shown by Proposition \ref{propo:numbS}, the typical number of solutions is $\binom{k+l}{p}/2^{l}$ which is exponential.
	It leads to a quantum complexity of $\tilde{O}\left( \frac{T_{quant}}{\sqrt{P_{p,l}}} \right) = \tilde{O} \left( \sqrt{\frac{T_{quant}^{2}}{P_{p,l}}} \right)$ while in classical case we have $\tilde{O}\left( \frac{T_{col}}{P_{p,l}} \right)$. The difference here between classical and quantum computation stands on the fact that we allow quantum algorithms to solve Instruction 5 and as it is shown in the previous formula, in order to halve the classical exponent of ISDs, we need to find a quantum algorithm which is twice efficient than in the classical case. Many classical algorithms have been proposed to solve Instruction 5 (see \cite{S88,D91,MMT11,BJMM12,MO15}). Nevertheless in any cases these algorithms compute solutions of $\Hm_{I}''\ev''^{T} = \sv_{I}''^{T}$ in amortized times $\tilde{O}(1)$ by using a lot of memories. Then as it was noticed in \cite{KT17}, there is a classical algorithm using less memory: Shamir-Shroeppel algorithm (see \cite{SS81}). Authors of \cite{KT17} then proposed to quantize this algorithm by using a random walk. We were inspired by their techniques to solve {\QDOOM}, it is the object of the following section.}

\subsection{Quantum Algorithm for solving {\QDOOM}}

We will focus on Instruction 5 and find the best tradeoffs for our quantum algorithm for $\QDOOM$. Similarly as in classical algorithms for SD, we will reduce our problem to a $k$-sum problem (actually a $4$-sum problem). Then by considering known results on quantum walks developed in \cite{KT17}, we will be able to give a running time for our quantum algorithm. Let us first introduce the following classical problem.

\begin{problem}[Generalized $k$-sum Problem]$ $
	\label{prob:sumP}

	Let $\cG$ be an Abelian group, $\cE$ be an arbitrary set, $k$ subsets $\cV_{1},\cdots,\cV_{k}$ of $\cE$, $k+1$ arbitrary maps:
	$$
	\forall i \in \llbracket 1,k \rrbracket, \mbox{ }  f_{i} : \cE \rightarrow \cG \quad ; \quad g : \cE^{k} \rightarrow \{0,1\}
	$$
	and an arbitrary $S \in \cG$. 	
	A solution is a tuple $(v_{1},\cdots,v_{k}) \in \cV_{1} \times \cdots \times \cV_{k}$ such that:
	\begin{itemize}
		\item $f_{1}(v_{1}) + \cdots + f_{k}(v_{k}) = S$ (subset-sum condition).

		\item $g(v_{1},\cdots,v_{k}) = 1$. 
	\end{itemize}
	
\end{problem}

We now show this reduction. Let $\Hm, \cH$ be an instance of {\QDOOM} and $\cH_{l}$ will denote the projection of $\cH$'s outputs onto their last $l$ coordinates.  We first pick an information set $I \subseteq \{1,\cdots,n\}$ of size $n-k-l$, then we build matrices $\Hm_{I}'$ and $\Hm_{I}''$ as in  \eqref{eq:1}.

\subsubsection{Associated $4$-sum problem.} We introduce the following sets and functions (see Equations \eqref{eq:3},\eqref{eq:4} and \eqref{eq:5}):
$$
\cG = \mathbb{F}_{2}^{l/2}\times \mathbb{F}_{2}^{l/2} \quad ; \quad \cE = \mathbb{F}_{2}^{k+l} ; S = 0
$$
$$
\forall i \in \llbracket 1,3 \rrbracket, \mbox{ }  f_{i} : \ev'' \in \mathbb{F}_{2}^{k+l} \mapsto \Hm_{I}'' \ev''^{T} ; f_{4} = \cH_{l}
$$
with
\begin{align*} 
\cV_{1} &\eqdef \{ (\ev_{1},\mathbf{0}_{2(k+l)/3}) \in \mathbb{F}_{2}^{k+l} \mbox{ } : \mbox{ } \ev_{1} \in \mathbb{F}_{2}^{(k+l)/3}, \mbox{ } |\ev_{1}| = p/3 \} \\
\cV_{2} &\eqdef \{ (\mathbf{0}_{(k+l)/3},\ev_{2},\mathbf{0}_{(k+l)/3}) \in \mathbb{F}_{2}^{k+l} \mbox{ } : \mbox{ }  \ev_{2} \in \mathbb{F}_{2}^{(k+l)/3}, \mbox{ } |\ev_{2}| = p/3 \} \\
\cV_{3} &\eqdef \{ (\mathbf{0}_{2(k+l)/3},\ev_{3}) \in \mathbb{F}_{2}^{k+l} \mbox{ } : \mbox{ }  \ev_{3} \in \mathbb{F}_{2}^{(k+l)/3}, \mbox{ } |\ev_{3}| = p/3 \} \\
\cV_{4} &\mbox{ be an arbitrary set of size } \binom{(k+l)/3}{p/3}
\end{align*} 
and
\[ g(v_{1},v_{2},v_{3},v_{4}) = 1 \iff |z^{\cH(v_{4})}_I(v_{1} + v_{2} + v_{3})| = w \]
\begin{proposition}
	If $(v_1,v_2,v_3,v_4)$ is a solution of the above problem then \\ $(v_1 + v_2 + v_3,v_4)$ is a solution of the $\QDOOM$ problem on inputs $(\Hm, \cH)$.
\end{proposition}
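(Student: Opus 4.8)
The plan is to verify directly that the pair obtained from a $4$-sum solution meets the two defining requirements of a $\QDOOM$ solution: Hamming weight exactly $w$, and the syndrome equation $\Hm\ev^T = \cH(\av)^T$. Concretely I would set $\av = v_4$ and take for the error the full-length vector $\ev \eqdef z_I^{\cH(v_4)}(v_1+v_2+v_3)$ produced by the reconstruction map of \eqref{eq:5}, so that the statement's ``$v_1+v_2+v_3$'' is really shorthand for this reconstructed, un-permuted word of $\mathbb{F}_2^{n}$ (note $v_1+v_2+v_3$ itself lies in $\mathbb{F}_2^{k+l}$). Since $v_4 \in \cV_4$ is a legitimate nonzero hash input, it is an admissible $\av \in \mathbb{F}_2^{*}$.

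The weight condition is immediate. Because $(v_1,v_2,v_3,v_4)$ is a solution of Problem~\ref{prob:sumP}, we have $g(v_1,v_2,v_3,v_4)=1$, and by the very definition of $g$ this means $|z_I^{\cH(v_4)}(v_1+v_2+v_3)| = w$, i.e. $|\ev| = w$.

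For the syndrome equation I would exploit the block structure coming from the Gaussian elimination \eqref{eq:1}--\eqref{eq:2}. First, the three maps $f_1,f_2,f_3$ are all the single linear map $\ev''\mapsto \Hm_I''\ev''^T$, so by linearity the subset-sum condition $f_1(v_1)+f_2(v_2)+f_3(v_3)+f_4(v_4)=S=0$ collapses (over $\mathbb{F}_2$) to $\Hm_I''(v_1+v_2+v_3)^T = \cH_l(v_4)$, which is exactly the lower-block equation $\Hm_I''\ev''^T = \sv_I''^T$ for $\sv = \cH(v_4)$. Second, the reconstruction map $z_I^{\cH(v_4)}$ was built in \eqref{eq:5} precisely so that $\ev$ satisfies the upper-block equation $\ev'^T = \Hm_I'\ev''^T + \sv_I'^T$ by construction. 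Combining the two blocks yields $\Um\Hm_{\pi_I}\ev_{\pi_I}^T = \Um\,\cH(v_4)^T$; multiplying by $\Um^{-1}$ and undoing the column permutation $\pi_I$ (using $\Hm_{\pi_I}\ev_{\pi_I}^T = \Hm\ev^T$) gives $\Hm\ev^T = \cH(v_4)^T = \cH(\av)^T$. Together with $|\ev| = w$ this shows $(\ev,\av)$ solves $\QDOOM$.

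The step I expect to require the most care is the bookkeeping that identifies $f_4(v_4) = \cH_l(v_4)$ with the correct lower syndrome block $\sv_I''$ of the $\Um$-transformed, $\pi_I$-permuted system: one must check that the last $l$ coordinates pinned down by the subset-sum condition and the first $n-k-l$ coordinates pinned down by $z_I^{\cH(v_4)}$ exactly partition the $n-k$ coordinates of $\Um\,\cH(v_4)^T$, so that reassembling them and inverting $\Um$ recovers the full, untransformed equation $\Hm\ev^T = \cH(\av)^T$ rather than merely a projection of it. The remaining ingredients --- linearity of the $f_i$ and the fact that the disjoint supports of $\cV_1,\cV_2,\cV_3$ make $v_1+v_2+v_3$ sit in the intended weight-$p$ window --- are routine.
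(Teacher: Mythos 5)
Your proposal is correct and follows essentially the same route as the paper's own proof: linearity of $f_1=f_2=f_3$ collapses the subset-sum condition to the lower-block equation $\Hm_I''(v_1+v_2+v_3)^T=\cH_l(v_4)$, the condition $g=1$ gives the weight, and the chain of block equivalences behind the definition of $z_I^{\sv}$ yields $\Hm\ev^T=\cH(v_4)^T$. You are merely more explicit than the paper about inverting $\Um$, undoing $\pi_I$, and reading ``$v_1+v_2+v_3$'' as the reconstructed word $z_I^{\cH(v_4)}(v_1+v_2+v_3)$, which is indeed the intended meaning.
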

\begin{proof}
	Let $(v_1,v_2,v_3,v_4)$ a solution of the associated $4$-sum problem. We have 
	\begin{align*} 
	f_{1}(v_{1})+f_{2}(v_{2})+f_{3}(v_{3})+f_{4}(v_{4})=0 & \iff \Hm_{I}''(v_{1} + v_{2} + v_{3})^{T} = \cH_{l}(v_{4}) \\
	& \iff v_{1} + v_{2} + v_{3} \in \cS_{I} \ \mbox{ for the syndrome } \cH(v_{4}) 
	\end{align*} 
	This means that $|v_1 + v_2 + v_3| = p$. We also $g(v_1,v_2,v_3,v_4) = 1$ which implies $|z^{\cH(v_{4})}_I(v_{1} + v_{2} + v_{3})| = w$. By definition of $z_I$, this shows that 
	$$\Hm(z^{\cH(v_{4})}_I(v_{1} + v_{2} + v_{3})) = \mathcal{H}(v_4)$$
	which concludes the proof.
\end{proof}

All the above discussion was for a fixed information set $I$ so our goal is to use a quantum algorithm for the $4$-sum problem to solve instruction 5. Fortunately, there already exists a quantum study of this problem using quantum walks \cite[Proposition 2]{KT17}.

\begin{proposition} \label{Proposition:JPetGhazal} Consider the generalized $4$-sum problem defined in Problem \ref{prob:sumP} with sets $\cV_{i}$ of the same size $V$. Assume that $\cG$ can be decomposed as $\cG = \cG_{0} \times \cG_{1}$ with $|\cG_{0}|,|\cG_{1}| = \Theta(V^{4/5})$. There is a quantum algorithm (using a random walk) for solving the $4$-sum problem in running time $\tilde{O}\left( V^{6/5} \right)$.
\end{proposition}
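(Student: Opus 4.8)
The plan is to cast the search for a marked $4$-tuple as a quantum walk search and to invoke the Magniez--Nayak--Roland--Santha framework on a product of Johnson graphs, in the spirit of the quantum random walks of \cite{KT17}. I would take as the walk space the product $J(V,r)^{4}$, whose vertices are $4$-tuples $(R_1,R_2,R_3,R_4)$ of $r$-element subsets $R_i \subseteq \cV_i$, and declare a vertex \emph{marked} when the (assumed essentially unique) solution tuple $(v_1,v_2,v_3,v_4)$ of Problem \ref{prob:sumP} satisfies $v_i \in R_i$ for all $i$. One walk step picks a coordinate and performs a Johnson-graph swap, so the spectral gap is $\delta = \Theta(1/r)$ and the fraction of marked vertices is $\eps = \Theta\big((r/V)^{4}\big)$, giving the MNRS cost $\tilde{O}\big(\text{Setup} + \eps^{-1/2}(\delta^{-1/2}\,\text{Update} + \text{Check})\big)$.

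The crux is a history-independent data structure exploiting the decomposition $\cG = \cG_0 \times \cG_1$ so that Check is $\tilde{O}(1)$ and Update is cheap. First I fix a target $c \in \cG_0$ and, for a vertex, maintain the two meet-in-the-middle collision lists
$$ D_{12} = \{(v_1,v_2)\in R_1\times R_2 : (f_1(v_1)+f_2(v_2))|_{\cG_0}=c\},\ D_{34}=\{(v_3,v_4)\in R_3\times R_4 : (f_3(v_3)+f_4(v_4))|_{\cG_0}=S|_{\cG_0}-c\}, $$
each of expected size $r^{2}/|\cG_0|$, stored sorted by their $\cG_1$-component together with a counter of full matches (pairs that also agree on $\cG_1$ and satisfy $g$). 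Building the structure costs Setup $=\tilde{O}(r + r^{2}/|\cG_0|)$; swapping one element touches only $\tilde{O}(1 + r/|\cG_0|)$ entries, so Update $=\tilde{O}(1+r/|\cG_0|)$; and Check merely reads the counter, so Check $=\tilde{O}(1)$. Since the correct $\cG_0$-value $c^{\ast}$ of the solution is unknown, I would wrap the whole walk in an amplitude-amplification search over the $|\cG_0|$ candidate targets, costing a factor $\sqrt{|\cG_0|}$.

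Substituting $|\cG_0|=|\cG_1|=\Theta(V^{4/5})$ and collecting terms, the total cost is
$$ \tilde{O}\!\left(\sqrt{|\cG_0|}\,\Big(r + \tfrac{r^{2}}{|\cG_0|} + \big(\tfrac{V}{r}\big)^{2}\big(\sqrt{r}\,(1+\tfrac{r}{|\cG_0|}) + 1\big)\Big)\right), $$
and I would optimize the single free parameter $r$. The choice $r=V^{4/5}$ equalises all four contributions at $V^{6/5}$, yielding the claimed $\tilde{O}\left( V^{6/5} \right)$ running time.

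I expect the main obstacle to be the data-structure and framework bookkeeping rather than the arithmetic: one must give a genuinely history-independent encoding so that the walk operator is well defined and the stated Setup/Update/Check costs hold as reversible quantum operations, and one must argue that nesting the inner quantum walk inside the outer amplitude amplification is sound (the inner routine is a bounded-error search, so its success probability has to be amplified before it can be used as an oracle). Verifying the spectral gap of the product Johnson walk and the marked-fraction estimate under the uniqueness heuristic are the remaining technical points, both standard once the framework is in place.
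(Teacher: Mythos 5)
The paper does not actually prove this proposition: it is imported verbatim as \cite[Proposition 2]{KT17}, so the only available comparison is with the argument in that reference. Your reconstruction --- an MNRS quantum walk on $J(V,r)^{4}$ with a history-independent Schroeppel--Shamir-style data structure that exploits $\cG=\cG_{0}\times\cG_{1}$, an outer amplification over the guessed intermediate value in $\cG_{0}$, and the balancing choice $r=\Theta(V^{4/5})$ --- is essentially the same argument as in \cite{KT17}, and the cost accounting checks out, modulo the standard technicalities (worst-case vs.\ expected bucket sizes, and error amplification of the nested walk) that you already flag yourself.
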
 

We now put everything together and present the running time of this quantum algorithm for $\QDOOM$.

\begin{theorem} We can solve {\QDOOM} for parameters $n,k$ and $w\geq d_{GV}(n,k)$ in time:
	$$
	\tilde{O}\left( \min_{0\leq l \leq n-k} \left( \frac{T_{quant}(p,l)}{\sqrt{P_{p,l}}}\right) \right) 
	$$
	where:
	$$
	P_{p,l} = \Theta \left( \min \left( 1, \frac{\binom{k+l}{p}\binom{n-k-l}{w-p}\binom{(k+l)/3}{p/3}}{2^{n-k}} \right) \right)  
	$$
	and 
	$$
	T_{quant}(p,l) = \binom{(k+l)/3}{p/3}^{6/5}
	$$
	with $p$ chosen such that:
	$$
	2^{l/2} = \Theta{\binom{(k+l)/3}{p/3}^{4/5}}
	$$	
\end{theorem}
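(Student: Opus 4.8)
The plan is to instantiate the generalized ISD skeleton of Algorithm~\ref{algo:isd} with a quantum outer loop and a quantum $4$-sum subroutine for Instruction~$5$, then to balance the two resulting parameters against each other. First I would fix an arbitrary information set $I$ of size $n-k-l$ and recall, from the reduction established just above, that solving Instruction~$5$ for the $\QDOOM$ instance $(\Hm,\cH)$ is exactly the generalized $4$-sum problem of Problem~\ref{prob:sumP}, with group $\cG=\F_2^{l/2}\times\F_2^{l/2}$, lists $\cV_1,\cV_2,\cV_3$ of weight-$p/3$ vectors supported on the three disjoint thirds of the $k+l$ coordinates, fourth list $\cV_4$ carrying $f_4=\cH_l$, and target $S=0$. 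Each $\cV_i$ has the same cardinality $V\eqdef\binom{(k+l)/3}{p/3}$, so Proposition~\ref{Proposition:JPetGhazal} applies once $\cG$ is split as $\cG_0\times\cG_1$ with $|\cG_0|=|\cG_1|=2^{l/2}$. This forces the hypothesis $|\cG_0|,|\cG_1|=\Th{V^{4/5}}$, i.e. $2^{l/2}=\Th{V^{4/5}}$, which is precisely the constraint tying $p$ to $l$ in the statement and which yields the subroutine running time $T_{quant}(p,l)=V^{6/5}$.

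Next I would pin down the per-iteration success probability $P_{p,l}$. The key point, and the whole reason $\QDOOM$ is easier than $\SD$ here, is that taking $\cV_4$ of size $V$ with $f_4=\cH_l$ means the single $4$-sum solve simultaneously tests $q=V$ distinct syndromes $\cH(v_4)$, $v_4\in\cV_4$: the subset-sum condition rearranges to $\Hm_I''(v_1+v_2+v_3)^T=\cH_l(v_4)$, i.e. decoding syndrome $\cH(v_4)$. By Proposition~\ref{propo:numbS} each such syndrome admits on average $M_{n,k,w}=\binom{n}{w}/2^{n-k}$ decodings, and, under the standard ISD independence heuristic, a fixed one has the required weight split $(w-p,p)$ across $I$ and its complement with probability $p_{p,l}=\binom{k+l}{p}\binom{n-k-l}{w-p}/\binom{n}{w}$. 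Hence the expected number of good (syndrome, solution) pairs for the chosen $I$ is $\Th{V\cdot M_{n,k,w}\cdot p_{p,l}}$, and the standard estimate $1-(1-p_{p,l})^{V M_{n,k,w}}=\Th{\min\left(1,\,V M_{n,k,w}\, p_{p,l}\right)}$ gives $P_{p,l}$; substituting $M_{n,k,w}$, $p_{p,l}$ and $V$ and cancelling $\binom{n}{w}$ reproduces the stated $P_{p,l}=\Th{\min\left(1,\binom{k+l}{p}\binom{n-k-l}{w-p}\binom{(k+l)/3}{p/3}/2^{n-k}\right)}$.

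Finally I would assemble the two pieces. The outer loop over information sets is Groverized: we run amplitude amplification over the choice of $I$, using the $4$-sum subroutine of the first step as the predicate that flags whether the chosen $I$ yields a weight-$w$ solution for some syndrome. Since a $\Th{P_{p,l}}$ fraction of sets is good, $\OOt{1/\sqrt{P_{p,l}}}$ amplified iterations suffice, each costing $T_{quant}(p,l)$, for a total time $\OOt{T_{quant}(p,l)/\sqrt{P_{p,l}}}$; minimizing over $0\le l\le n-k$, with $p$ determined by $2^{l/2}=\Th{V^{4/5}}$, gives the claimed bound.

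I expect the main obstacle to be the clean composition of the outer Grover search with the inner quantum-walk $4$-sum solver: amplitude amplification wants the inner procedure to behave as a (near-)unitary, (near-)deterministic predicate on $I$, whereas the quantum walk is a bounded-error search that must be run reversibly and with its failure probability driven low enough not to pollute the outer amplification. This is exactly the regime handled by the quantum-walk machinery of~\cite{KT17} underpinning Proposition~\ref{Proposition:JPetGhazal}, so the argument reduces to invoking it with the parameters fixed above rather than re-deriving it; the remaining content is the bookkeeping of $P_{p,l}$ and the observation that the balancing constraint $2^{l/2}=\Th{V^{4/5}}$ is simultaneously the hypothesis of Proposition~\ref{Proposition:JPetGhazal} and the definition of $p$ in the statement.
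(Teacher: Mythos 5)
Your proposal is correct and follows essentially the same route as the paper: the paper's own justification is just the two sentences after the theorem statement (take $T_{quant}$ from Proposition~\ref{Proposition:JPetGhazal} applied to the $4$-sum instance with $V=\binom{(k+l)/3}{p/3}$, and take $P_{p,l}$ from the classical DOOM analysis with $q=\binom{(k+l)/3}{p/3}$ syndromes, then Groverize the outer loop to get $\tilde{O}(T_{quant}/\sqrt{P_{p,l}})$), which is exactly the assembly you carry out. Your explicit cancellation of $\binom{n}{w}$ in $q\cdot M_{n,k,w}\cdot p_{p,l}$ and your remark on composing amplitude amplification with the bounded-error quantum walk are more detail than the paper provides, but they do not change the argument.
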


The value of $T_{quant}$ is obtained from Proposition \ref{Proposition:JPetGhazal}. The other parameters are obtained from the classical analysis in the case where we consider $\binom{(k+l)/3}{p/3}$ syndromes. We present below quantum asymptotic exponents for SD and for $\QDOOM$. Again, we consider $k = n/2$ and for  error weights $w \approx 0.11n$ and $w \approx 0.191n$ which corresponds to the weight used in the {\DST} signature scheme.

\begin{table}[H]
	\centering
	\begin{tabular}{|c||c|c|c|}\cline{2-4}
		\multicolumn{1}{c}{}& \multicolumn{3}{|c|}{Quantum asymptotic exponent in base $2$ (divided by $n$)} \\ 
		\hline 
		$w/n$ &  SD (Prange)  & SD \cite{KT17}  &  \QDOOM (this work)  \\
		\hline
		\hline
		0.11 & 0.059958 & 0.058434
		& 0.056683
		\\
		0.191 & 0.010139 & 0.009218
		& 0.009159
		\\
		\hline
	\end{tabular}
	\vspace{0.3cm}
	\caption{Asymptotic exponent for quantumly solving SD and {\DOOM}$_{\infty}$ for
		$k/n=0.5$}
\end{table}

\section{Quantum security of the {\DST} signature scheme}
\label{sec:qsurf} 
We apply in this section our results to the SURF signature scheme presented in \cite{DST17}. Let us recall the condition upon which stands our security reduction in the {\QROM}: 
\cdt*
where $\lambda$ is the security parameters. Authors of \cite{DST17} proposed to use the family of $(U|U+V)$-codes as the secret key, namely:
\begin{definition}
	[$(U,U+V)$-Codes]
	Let $U$, $V$ be linear binary codes of length $n/2$ and dimension $k_{U}$, $k_{V}$. We define the subset of $\mathbb{F}_{2}^{n}$:
	\begin{displaymath}
	(U,U+V) \eqdef \{ (\uv,\uv+\vv) \mbox{ such that } \uv \in U \mbox{
		and } \vv \in V \}
	\end{displaymath}
	which is a linear code of length $n$ and dimension $k = k_{U} + k_{V}$. 
	%The resulting code is of minimum
	% distance $\min(2d_U,d_V)$ where $d_U$ is the minimum distance of $U$ and 
	% $d_V$ is the minimum distance of $V$.
\end{definition}

We choose parameters of public keys as:
$$
n = 13976 \ ; \ k = 6988 \ ; \ k_{U} = 4320 \ ; k_{V} = 2668 \ ; \ w = 2668.
$$

The value $n$ was chosen to get 128 bits of security for the $\QDOOM$ problem and the other parameters were already constrained (given $n$) from the specifications of SURF. We can now check the $3$ conditions.

\begin{enumerate}
	\item Using the results of \cite[Proposition 4]{DST17}, we get for our parameters \\ $\mathbb{E}_{\Hpub} \left( \rho(\cD_{w}^{\Hpub},\mathcal{U}_{n-k}) \right) = 2^{-0.06n}$ which gives if we choose a conservative $\qhash = 2^{128}$:
	$$
	\qhash^{\frac{3}{2}} \sqrt{\mathbb{E}_{\Hpub} \left( \rho(\cD_{w}^{\Hpub},\mathcal{U}_{n-k}) \right)} = \frac{1}{2^{235}}. 
	$$
	\item SURF performs a rejection sampling (see \cite[Section 5]{DST17}) algorithm that achieves $\rho(\mathcal{U}_w,\mathcal{D}_w) = 0$.
	\item While the authors of \cite{DST17} do not formally study quantum distinguishers for their code family, the best known classical algorithms not only also use multi-collision techniques and are hard even to Groverize. Also, for our parameters the classical advantage (see \cite[Section 7]{DST17}) is of the order of $2^{-500}$. Any quantum distinguisher for those codes would have to find radically new quantum algorithmic techniques way beyond the state of the art.
\end{enumerate}

Finally, with parameters and using the analysis of \cite{DST17}, we obtain the following parameters (we also include the parameters of the other quantum-safe signature schemes)

\begin{table}[H]
	\centering
	\caption{Security parameters for signature schemes with quantum security claims}
	\begin{tabular}{|c|c|c|c|c|}
		\hline\noalign{\smallskip}
		Scheme & Quantum security  & Public key size  & Private key size & Signature size \\
		& (in bits) & (in kBytes) & (in kBytes) & (in kBytes) \\
		\noalign{\smallskip}
		\hline
		\noalign{\smallskip}
		SPHINCS & $128$ & $1$  & $1$ & $41$ \\
		GPV-poly & 59 & $55$ & $26$ & $32$ \\
		GPV & 59 & $27840$ & $12064$ & $30$ \\
		TESLA-2 & $128$ & $21799$ & $7700$ & $4$ \\
		SURF & $128$ & $5960$ & $3170$ & $1.7$ \\
		\hline
	\end{tabular}
\end{table}

Moreover, for this choice of parameters the SURF signature scheme achieves a classical security of $231$ bits.

\section{Conclusion}
\label{sec:conclusion}
In this paper, we presented a method to perform tight security reductions for FDH-like signature schemes using code-based computational assumptions, more precisely on the $\QDOOM$ problem. We also analyzed the best known quantum algorithm for this problem. Finally, we applied our security reduction to the SURF signature scheme, presenting parameters for 128 bits of concrete quantum security and think this scheme will play an important role in the future standardization attempts from NIST. We finally list several open questions and perspectives that come out of this work:
\begin{itemize}
	\item Our security reduction can be applied to only one signature scheme now. Are there other constructions that could benefit from this reduction? The SURF signature scheme uses a code family which has very little structure. This strengthens the security but increases the key sizes. Can we use another code family that would stay secure with smaller key sizes?
	\item More generally, our techniques show that it is much better in the code-based setting to consider a computational assumption which starts from many instances of a problem and where we need to solve one of them. This One Out of Many approach appears implicitly when performing instance injection but doesn't appear explicitly in other signature schemes. For example, it would be very interesting to consider a One Out of Many equivalent for lattice schemes, and could be a way to reduce losses resulting from the quantum security reduction.
	\item Finally, since the security rely on the quantum hardness of the {\QDOOM} problem, it is important to continue to study it - similarly as other quantum-safe computational assumptions - in order to increase our trust in quantum secure schemes.
\end{itemize}

\addcontentsline{toc}{section}{Bibliography}
\newcommand{\etalchar}[1]{$^{#1}$}

\newpage
\appendix

\end{document}